\documentclass{cccg26} 
\usepackage{amssymb,amsmath}
\usepackage{nicefrac}
\usepackage{color}
\usepackage[american]{babel}
\usepackage{graphicx}
\usepackage{version}
\usepackage{subcaption}
\usepackage[disable]{todonotes}
\usepackage{url}
\usepackage{booktabs} % nice tables
\usepackage{enumitem} % configurable enumerate
\usepackage{hyperref} 
\usepackage{thm-restate}

\newif\iffull
\fulltrue
\graphicspath{{figures/}}

\newtheorem{claim}{Claim}

\newcommand{\calT}{{\ensuremath{\cal T}}}
\newcommand{\calS}{{\ensuremath{\cal S}}}

\renewcommand{\d}{\ensuremath{s}}

\newcommand{\leaveout}[1]{}

\date{}
\title{Unbent collections of non-planar $\d$-grid-drawings%\thanks{Many thanks to the authors of \cite{AnticEtAlWG25} for inspiring conversations.}
}
\author{Therese Biedl 
\footnote{
David R.~Cheriton School of Computer
Science, University of Waterloo, Waterloo, Ontario N2L 3G1, Canada.  \textit{biedl@uwaterloo.ca}.
Supported by NSERC.  Many thanks to the authors of \cite{AnticEtAlWG25}, and especially Todor Anti\'{c}, Giuseppe Liotta and Sascha Wolff, for helpful input.}
}
\begin{document}

\maketitle

\begin{abstract}
In a recent paper, Anti\'{c} et al.~studied 
collections of planar orthogonal drawings of a graph where every edge is unbent in at least one drawing.   This paper generalizes this concept to non-planar drawings, and shows that then two drawings always suffice (for planar drawings three drawings are sometimes needed).   The results can also be generalized to $s$-grid drawings for $s\geq 3$.
\end{abstract}

\section{Introduction}

In the standard \emph{graph drawing} problem, we are given a graph and the goal is to create a 
drawing %that helps to understand the graph    
(often a node-link-diagram in $\mathbb{R}^2$, but other graph drawing styles and/or higher dimensions have been studied). 
%The literature in this field is extensive and cannot all be reviewed here, 
See various textbooks and handbooks \cite{DBETT98full,KW01,NR04,Tam13} for an overview of the field.

If the given graph is very complex, then creating one legible drawing is nearly impossible.   For this reason, Hlin\v{e}n\'{y}  and Masa\v{r}\'{i}k \cite{HlinenyM23} proposed a different paradigm, where we are given a graph $G$ and we want to create a \emph{collection of drawings} $\Gamma_1,\dots,\Gamma_k$ of $G$ for some $k\geq 1$ that all together satisfy some property that is useful for visualization purposes.   Hlin\v{e}n\'{y} and Masa\v{r}\'{i}k specifically studied collections $\Gamma_1,\dots,\Gamma_k$ of node-link diagrams under the constraint that every edge is drawn without crossing in at least one of $\Gamma_1,\dots,\Gamma_k$. They showed that it is NP-hard to minimize the \emph{order} $k$, and gave various asymptotic bounds on the order that can be achieved for complete and other dense graphs.

The idea of collections of drawings was picked up by Anti\'{c}, Liotta, Masa\v{r}\'{i}c, Ortali, Pfretzschner, Stumpf, Wolff and Zink \cite{AnticEtAlWG25} who studied a very different drawing style.   Specifically, they considered \emph{planar graphs} (graphs that can be drawn in the plane without crossing) and \emph{orthogonal drawings} (a drawing style that represents vertices as points and edges as polygonal curves that use only horizontal and vertical segments); furthermore the drawings must be planar, i.e., not have any crossings.    Such drawings can exist only if the graph is a \emph{4-graph}, i.e., every vertex has at most four incident edges.      Orthogonal drawing have a long history, see for example an overview article by Duncan  and Goodrich and the references therein \cite{DuncanG13.crossref}.

The goal in the paper of Anti\'{c} et al.~\cite{AnticEtAlWG25} was to create an \emph{unbent collection} of planar orthogonal drawings, i.e., they want to find planar orthogonal drawings $\Gamma_1,\dots,\Gamma_k$ such that every edge $e$ is drawn \emph{without bend}
in at least one of the drawings. They argued that for some planar 4-graphs (including the octahedron, the graph of Figure~\ref{fig:octahedron}), this cannot be achieved with at most two planar drawings. On the other hand it is easy to show that $k=3$ drawings always suffice.   
%(They also study the total number of bends in the drawings, a topic that we will not consider here.)

\begin{figure}[ht]
\hspace*{\fill}
\includegraphics[page=5,scale=0.8]{octahedron.pdf}
\hspace*{\fill}
\includegraphics[page=7,scale=0.8]{octahedron.pdf}
\hspace*{\fill}
\caption{Two orthogonal drawings of the octahedron; every edge is drawn without bend (thick)
in one of the drawings.  No two \emph{planar} drawings of the octahedron with this property
exist \cite{AnticEtAlWG25}.}
\label{fig:octahedron}
\end{figure}

Inspired by the paper by Antic et al.~\cite{AnticEtAlWG25}, we ask a very similar question, but for graphs that are not necessarily planar (and so the drawings also need not be planar).   Thus, given a 4-graph $G$, we want to find an unbent collection of (not necessarily planar) orthogonal drawings of $G$ 
whose order is small.
It is very easy to see that order $3$ is always possible.    But as the main result of this paper, we show that in fact order $2$ suffices.

\begin{theorem}
\label{thm:un_lt_2}
\label{thm:orth}
Every 4-graph $G$ has an unbent collection of orthogonal drawings of order  2.   
%Put differently, $G$ has two orthogonal (not necessarily planar) drawings $\Gamma_1,\Gamma_2$  
%such that every edge of $G$ is drawn without bend in at least one of them.   
%Such drawings can be found in $O(n)$ time.
%\future{can we actually find the split in linear time?   Think about this as you re-do the proof.   If yes, brag about it here.}
\end{theorem}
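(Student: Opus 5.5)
The plan is to reduce the theorem to a purely combinatorial edge-partition problem, and then draw each half. Call an edge set $F\subseteq E(G)$ \emph{straight-drawable} if $G$ has an orthogonal drawing in which every edge of $F$ is a single horizontal or vertical segment. It suffices to split $E(G)=F_1\cup F_2$ with both $F_i$ straight-drawable; then $\Gamma_i$ is the drawing witnessing this for $F_i$.

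\textbf{Step 1: drawing one half.} I would first show that if every component of $H=(V,F)$ is a path or a cycle of length at least $4$, then $F$ is straight-drawable. Place the components in pairwise disjoint regions, far apart along a diagonal, and use distinct $x$- and $y$-coordinates across components.
\begin{itemize}
\item Draw a path as a horizontal segment, with each vertex using its left and right ports.
\item Draw a cycle $v_1\dots v_k$ with $k\geq 4$ as the boundary of a rectangle. Put $v_1,v_2,v_3,v_4$ at the corners (in cyclic order). Put the remaining vertices on the side $v_4v_1$, so that the edges $v_4v_5,\dots,v_kv_1$ are collinear consecutive segments.
\end{itemize}
Every vertex has now used at most two of its four ports in $F$. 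The edges of $E\setminus F$ are then routed with bends from the free ports. Since planarity is not required, crossings are allowed. By perturbing coordinates we keep segments from overlapping each other and from passing through vertices. Each such edge needs only a constant number of bends: leave the port, go out far enough, then travel along fresh coordinates. One has to check that the free ports always suffice, i.e., that a vertex never needs a free port pointing in an impossible direction. This holds because a bent edge may leave in any free direction and turn afterwards.

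\textbf{Step 2: the partition.} First add dummy edges and vertices to make $G$ $4$-regular; parallel edges are allowed. Then use Petersen's theorem, obtained by orienting along an Euler tour of each component and splitting by in/out edges, to get a $2$-factorization $E=F_1\cup F_2$. Deleting the dummy edges only turns cycles into paths, so it does no harm. Cycles of length $2$ (parallel dummy pairs) and of length $\geq4$ are fine after this deletion. The only obstruction to Step 1 is a \emph{triangle} in some $F_i$: a triangle cannot be drawn with three straight orthogonal edges.

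\textbf{Step 3: eliminating triangles (main obstacle).} I expect this to be the hardest step. The first thing I would try is to choose, among all $2$-factorizations, one minimizing the number of triangles in $F_1$ and $F_2$, and to exchange edges when a triangle $abc\subseteq F_1$ remains. Let $ax,ay$ be the $F_2$-edges at $a$. The exchange swaps $ab\in F_1$ with a suitable $F_2$-edge at $a$ or $b$, following an alternating walk until the degrees rebalance. One then has to argue that this walk can be closed without creating a new triangle, so the triangle count drops.

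If this swapping argument resists, the fallback is to enlarge the class in Step 1. Straight drawings also exist for subgraphs of maximum degree $3$ in which each component admits an H/V labelling with at most two H-edges and at most two V-edges per vertex and no forced overlap. For example, a triangle with a pendant path still fails, but a triangle edge can be moved into the other part, where a degree-$3$ vertex is drawn as a T-shape. With this enlarged class, moving one edge of each offending triangle to the other side suffices. The delicate point is to ensure that this move does not create a new triangle, or an undrawable configuration, in the receiving part.
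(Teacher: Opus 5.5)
Your reduction to an edge-partition and your Step 1 are fine; they are a special case of the paper's Lemmas~\ref{lem:partition_to_drawing} and~\ref{lem:easy_to_draw}. But Step 3 is the entire combinatorial content of the theorem, and your primary route there is not just unproven but impossible in general.

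Your drawable class has maximum degree $2$. So you need a partition of $E(G)$ into two triangle-free subgraphs of maximum degree $2$. In a $4$-regular $G$ this is exactly a $2$-factorization with no triangle in either factor, and such a factorization need not exist. Here is a counterexample.
\begin{itemize}
\item Take a simple triangle-free cubic graph $H$ without a perfect matching, e.g.\ a vertex $c$ joined by three bridges to three copies of $K_{3,3}$ with one edge subdivided; $H-c$ has three odd components.
\item Let $G=L(H)$, a simple $4$-regular graph. As $H$ is triangle-free, the triangles of $G$ are exactly the sets $T_x$ formed by the three edges of $H$ at a vertex $x$, and these triangles partition $E(G)$.
\item In a triangle-free $2$-factorization every $T_x$ is split $2{:}1$. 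Let $a_x$ be the corner of $T_x$ where the two majority edges meet; it is an edge of $H$ at $x$. Corner $a_x$ has both its $T_x$-edges in one factor, while the other two corners get one edge from each factor.
\item A vertex $e=uv$ of $G$ receives two edges from $T_u$ and two from $T_v$, and needs exactly two in each factor. Hence $e=a_u$ if and only if $e=a_v$.
\item Therefore $\{a_x : x\in V(H)\}$ is a perfect matching of $H$, a contradiction.
\end{itemize}
So no triangle-minimizing exchange argument can close Step 3. Any valid partition must sometimes give a part vertices of degree $3$ or $4$.

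Your fallback does not fill the gap either. The enlarged class is never specified, and you do not show that moving a triangle edge keeps the receiving part drawable. Triangle-freeness plus maximum degree $3$ is not enough: $K_{2,3}$ has no bendless orthogonal drawing. So ``no new triangle'' is not even the right invariant.

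The missing idea is the paper's choice of target class, together with an induction that produces it. The paper uses triangle-free \emph{strict cacti}: cycles are vertex-disjoint, and degrees may go up to $4$. These have bendless drawings, obtained by contracting cycles to a tree and adding pieces in pre-order. Lemma~\ref{lem:4graph_cactus} then shows that every $4$-graph splits into two of them, by the following cases.
\begin{itemize}
\item If there is an edge cut of size at most $2$, split both sides by induction and add one cut edge to each part. There it is a bridge, so it creates no cycle.
\item Otherwise no proper subgraph has $m\geq 2n-1$, so $G-v$ has arboricity $2$ by Theorem~\ref{thm:NW}. If some degree-$4$ vertex $v$ has an anti-matching $\{a,b\},\{c,d\}$ in its neighbourhood, add $va,vb$ to one forest and $vc,vd$ to the other. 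This creates at most one cycle per part, of length at least $4$. A degree-$3$ vertex whose neighbourhood is not a triangle is handled similarly.
\item If neither applies, every vertex lies in a $K_4$, and these $K_4$'s are disjoint. So $G$ is $4$-edge-colourable, and pairing colour classes gives two unions of even cycles and paths.
\end{itemize}
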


Note that `two' is obviously best-possible; there are many 4-graphs that do not have an orthogonal drawing without bends (consider
a triangle, for example, or graph $K_{2,3}$).    Testing whether order one suffices 
%Naturally one wonders whether we could perhaps even achieve this with one drawing, which then would have to be a \emph{rectilinear drawing} (where all edges are
for a specific graph $G$ is NP-hard, since this is the same as testing
whether $G$ has an orthogonal drawing without bends, which was shown to be NP-hard  by
Eades, Hong, and Poon~\cite{EadesHP09}. 
% TB: Mostly for myself.   Reduction from 3-sat.   Build a frame that fixes that variables are left while clauses go right.   Variable-gadgets are also quite fixed, but can stretch upwards and flip the entire thing left-right.   Clause-gadgets are just squares with connections to the variables that can have some bends, and since we can change the embedding, these squares have a lot of flexibility.   It works out such that two of the connections to variable-gadgets can always be realized, but for the third one the correct literal-side of the variable-gadget must have been turned towards the clauses, i.e., the literal must be true. 
%They also have some FPT results, but the parameter is close to $n$, so not a very helpful result.  
%We could try to show that the problem is FPT w.r.t.\ vertex cover number (vcn), but again vcn is linear in $n$ since the maximum degree 4.  
%Is the planar version FPT in pathwidth?  
%(If the parameter is the height~$h$ of the drawing, then we can minimize the width of a given orthogonal representation by dynamic programming in $n^{O(h)}$ time~\cite{dgklwz-paoc-JCSS26}.)   

A natural generalizations of orthogonal drawings are \emph{$\d$-grid drawings}.   Formal definitions will be given below, but the most prominent examples are \emph{hexagonal} ($\d=3$) and \emph{octilinear drawings} ($\d=4$), see also Figure~\ref{fig:K7}.   Such drawings can exist only for $2\d$-graphs, i.e., graphs with maximum degree at most $2\d$.
Much less is known about these types of drawings, but see for example 
\cite{AB04full,BBB+08,Kant93-hexagonal,Tol90} 
and \cite{BekosFK19,BekosGKK15, BekosKK17, Nollenburg05}  for results on hexagonal and octilinear drawing results, respectively,
and \cite{KeszeghPP13} for planar drawings with $\d$ slopes. 

\begin{figure}[ht]
\hspace*{\fill}
\includegraphics[page=1,scale=0.8,angle=180]{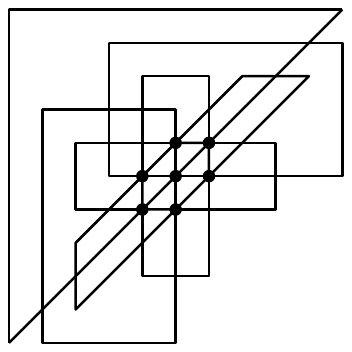}
\hspace*{\fill}
\includegraphics[page=1,scale=0.8,angle=180]{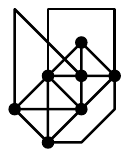}
\hspace*{\fill}
\caption{A hexagonal drawing of $K_7$, and a small octilinear drawing.} 
\label{fig:K7}
\end{figure}

There appear to be no previous results on unbent collections of $\d$-grid drawings.
We show that the results for orthogonal drawings transfer to $\d$-grid drawings. 
%both our Theorem~\ref{thm:un_lt_2} as well as the results of Anti\'{c} et al.
%So generalizing from orthogonal drawings to grids with slopes of gridlines, 
%we study \emph{unbent collections of $\d$-grid drawings}, i.e., collections $\Gamma_1,\dots,\Gamma_k$ of (not necessarily planar) $\d$-grid drawings of a $2\d$-graph $G=(V,E)$. The aim is to minimize the size $k$ of the collection.   
%It is very easy to see that these always exist, and order $d{+1}$ always suffices,
%because a $2\d$-graph $G$ can have at most $dn$ edges, and hence has an edge-partition into $d{+}1$ trees $T_1,\dots,T_{d+1}$.   Each tree $T_i$ can be drawn without bend, the edges of $E\setminus T_i$ can be added into this drawing with sufficiently many bends (see also Section~\ref{sec:insert_edge}, and the result hence follows.   Theorem~\ref{thm:un_lt_2} hence states that we can \emph{beat} this upper bound for $d=2$, since we only need $d$ drawings.   This turns out to be true for the other drawing styles as well (and is in fact much easier to prove).
%\future{We can do collections of $\d$-grid drawings of order $d+1$ always.   That's worth saying and gives an excuse to prove the thingy of Antic et al easier.   Write!}

\begin{theorem}
\label{thm:hex_oct_lt_2}
\label{thm:hex_oct}
For $\d\geq 3$, every $2\d$-graph has an unbent collection of $\d$-grid drawings of order $s$.
%Put differently, there are $\d$-grid drawings $\Gamma_1,\dots,\Gamma_d$ of $G$
%such that every edge of $G$ is drawn without bend in at least one of them.   
%Furthermore, in the drawings no three edges cross in a point and no two bends coincide.
%Such drawings can be found in $O(n)$ time.
\end{theorem}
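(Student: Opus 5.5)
The plan is to split the edges into $\d$ classes so that each class can be drawn with all of its edges unbent in one $\d$-grid drawing, and then to add the remaining edges of $G$ to each drawing with bends. The natural classes are pseudoforests, that is, graphs in which every connected component contains at most one cycle. A $2\d$-graph has $m_H\le \d\, n_H$ for every subgraph $H$. By the pseudoarboricity theorem (the Hakimi/Picard--Queyranne version of Nash-Williams), the edge set therefore partitions into $\lceil \max_H m_H/n_H\rceil\le \d$ pseudoforests $P_1,\dots,P_\d$. For the $i$-th drawing $\Gamma_i$ we draw all edges of $P_i$ without bends, and then route all edges of $E\setminus P_i$ with bends. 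This gives a collection of order $\d$.

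\textbf{Drawing a pseudoforest without bends.} Take a component $C$ of $P_i$. If it contains a cycle $v_1\dots v_k$ with $k\ge 3$ (the graph is simple), we draw that cycle as a triangle whose sides lie on three grid directions. For $\d\ge 3$, three slopes spaced by multiples of $\pi/\d$ always form a nondegenerate triangle; for example, directions $0$, $\pi/\d$ and $2\pi/\d$ work. The $k$ vertices are distributed along the three sides, with possibly several vertices on one side, so consecutive edges are collinear but do not overlap. This is exactly where $\d\ge 3$ is used: for $\d=2$ no triangle exists, which is why Theorem~\ref{thm:orth} needs a different argument.

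Each cycle vertex then uses two of its $2\d$ ports, and each tree vertex uses one port towards its parent. The trees hanging off the cycle, or the whole component if it is a tree, are drawn by BFS from the cycle or root. Each child $w$ of a vertex $v$ is placed on a still-unused ray at $v$, and the distance is chosen generically, for example very large or with incommensurable lengths. Since $\deg(v)\le 2\d$, enough rays are always free. The generic choice guarantees that no vertex lies in the interior of an edge and that no two edges overlap. Crossings are allowed, so nothing else has to be checked. Different components are placed far apart.

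\textbf{Adding the other edges.} Every vertex still has at least as many free ports as it has incident edges of $E\setminus P_i$, because the total degree is at most $2\d$. For each such edge $uv$ we leave $u$ and $v$ along free ports, going only a short distance $\varepsilon$ so that the segments hit no other vertex (possible by genericity). From there we route to a large "corridor" region outside the bounding box using grid-direction segments with bends. Each edge gets its own distinct far-away track, so no two bends coincide and no edge passes through a vertex.

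\textbf{Main obstacle.} I expect the routing step to be the most delicate part. The escaping segment from a port at $u$ must reach the outer region without running through another vertex, and it must never overlap another edge along a grid line. I would handle this by first perturbing the lengths used in the bendless drawing of $P_i$ so that every vertex has a distinct projection onto each grid direction. After that, any ray in a grid direction from a vertex contains no other vertex, and collinear overlaps of routed segments can be avoided by choosing distinct offsets for the tracks.
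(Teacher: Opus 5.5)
Your proposal is correct and follows the paper's proof. You partition $E$ into $\d$ pseudoforests via the pseudoarboricity bound, draw each pseudoforest unbent (closing its cycle with three slopes, which is the only place $\d\ge 3$ enters), and then insert the missing edges with bends and crossings.

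The differences are only technical. The paper makes room by scaling the drawing, whereas you use generic lengths. On a rational-slope grid, grid points along a ray are integer multiples of a primitive step, so ``incommensurable'' lengths should be ``generic integer multiples''. The invariant you want is that no \emph{free} gridray at a vertex contains another vertex. Finally, the paper's $\d$-grid may use any $\d$ rational slopes, so your equiangular example should be replaced by the remark that any three distinct slopes bound a nondegenerate triangle.
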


\begin{theorem}
\label{thm:hex_oct_planar}
For $\d\geq 3$, every planar $2\d$-graph has an unbent collection of planar $\d$-grid drawings of order $s{+}1$.
%Put differently, there are $d$-grid drawings $\Gamma_1,\dots,\Gamma_d$ of $G$
%such that every edge of $G$ is drawn without bend in at least one of them.   
%Furthermore, in the drawings no three edges cross in a point and no two bends coincide.
\end{theorem}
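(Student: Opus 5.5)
Since Theorem~\ref{thm:hex_oct_planar} concerns planar drawings, I would work from a fixed planar embedding of $G$ and use an edge-partition argument. A planar graph has at most $3n-6$ edges, so by Nash-Williams its arboricity is at most $3 \le s$. This means $E$ splits into $s$ forests $F_1,\dots,F_s$. Using one more drawing as slack, I would aim for $s{+}1$ forests chosen so that the bounded-degree constraint is respected in each one.

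In the $i$th drawing $\Gamma_i$ ($1\le i\le s+1$), I would draw the forest $F_i$ as a planar $s$-grid drawing in which every edge of $F_i$ is a single straight segment along a grid slope. All remaining edges are then routed with bends so that the whole drawing stays planar and consistent with the embedding.

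The key steps, in order, are as follows.

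\begin{enumerate}[label=(\arabic*)]
\item \emph{Choose the partition.} I would take $s{+}1$ forests, or more flexibly $s{+}1$ subgraphs that are each disjoint unions of paths or caterpillars with small degree. A planar graph with maximum degree $2s$ should admit such a split: use three forests and pad with empty classes, or refine with a star-forest decomposition if per-class degree must be controlled.
\item \emph{Embed each class straight.} For each class $F_i$, I would place the vertices on the $s$-grid so that the edges of $F_i$ are straight grid segments. Working along a BFS order of each tree, each child gets a free grid direction at its parent; since every vertex has degree at most $2s$ and $2s$ directions are available, this is always possible. The edges are made long, with exponentially increasing lengths, so that the tree is drawn planarly. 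The rotation at each vertex must match the planar embedding, which is possible because the $2s$ directions can be assigned cyclically in embedding order.
\item \emph{Route the rest.} The remaining edges $E\setminus F_i$ are added with bends while preserving planarity. The approach is to treat the straight tree drawing as a partial planar drawing with the correct rotation system and route every other edge along its face. On a fine enough grid every port is available in the correct cyclic order, because all $2s$ ports at a vertex are assigned by the rotation.
\end{enumerate}

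The hardest step is step (3): extending a fixed planar straight drawing of a forest to a planar $s$-grid drawing of the whole embedded graph, with ports forced by the rotation. My first approach would be to scale the tree drawing by a large factor, thicken it into a thin corridor structure, and route each non-tree edge as a staircase hugging the face boundary. Routing in nested order, innermost edges first, avoids crossings. The extra $(s{+}1)$st drawing serves as slack if a single forest per drawing creates conflicts. The fallback is to use that drawing to handle edges whose forced port directions conflict.
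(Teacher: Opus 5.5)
Your proposal is correct. Its drawing half coincides with the paper's:
\begin{itemize}
\item reserving all $2s$ ports in the rotation order of $G$ is exactly the paper's device of adding stubs to obtain $F_i^+$;
\item your tree placement plays the role of Lemma~\ref{lem:easy_to_draw};
\item routing the non-forest edges through the face of the forest drawing is Lemma~\ref{lem:insert_edges}.
\end{itemize}

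The genuine difference is the decomposition. The paper gets $s{+}1$ forests from Lemma~\ref{lem:pseudoforest}. That lemma uses only the degree bound $m(H)\le s\,n(H)$ (plus simplicity for small $H$), so it parallels the non-planar Theorem~\ref{thm:hex_oct}. The extra drawing there is the cost of switching from pseudoforests to forests, whose drawings do not separate the plane. You instead use planarity, $m(H)\le 3n(H)-6$, to get arboricity at most $3$. The drawing step does not depend on how many forests there are, so your route actually gives an unbent collection of planar $s$-grid drawings of order $3$ for every $s\ge 2$. The theorem then follows by padding the collection with further drawings.

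Consequently, several hedges in your proposal can be dropped:
\begin{itemize}
\item The $(s{+}1)$st drawing is not needed as ``slack''.
\item The fallback for ``conflicting port directions'' never arises. At a child the port of the parent edge is forced, but the other $2s-1$ directions always accommodate its at most $2s-1$ remaining edges in rotation order.
\item The caterpillar/star-forest refinement is unnecessary, since every subgraph of $G$ already has maximum degree at most $2s$.
\end{itemize}

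Two details need repair.

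\emph{Edge lengths in step (2).} The lengths must \emph{decrease} geometrically with depth, by a ratio depending on the smallest angle between gridrays. Then every subtree stays inside a small disk around its root. Lengths that grow with depth produce crossings. The paper avoids this issue by rescaling the whole drawing before each insertion (Claim~\ref{claim:addPoint}).

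\emph{Nested routing in step (3).} Your innermost-first argument works as stated only if the forest is a spanning tree. In that case the stub ends appear along the single boundary walk of its face in the cyclic order fixed by the rotation system. Since that rotation system comes from a planar embedding of $G$, the remaining edges form a non-crossing chord system and can be routed innermost first. If $F_i$ is disconnected, first insert non-forest edges that extend it to a spanning tree of each component of $G$. For these edges any route through the face is fine, because the result is again a forest (with stubs) whose complement is connected. The paper's own sketch via Lemma~\ref{lem:insert_edges} is equally terse on this point.
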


Clearly again order two or more is needed for the unbent collections of $\d$-grid drawings: $K_{2,\d{+}1}$
does not have an $\d$-grid drawing without bend.   (On the other hand, a triangle can be drawn without
bend for $\d\geq 3$, which is the main reason why the proof for $\d\geq 3$ is simpler than for $\d=2$.)
Testing whether a graph has an $\d$-grid drawing without bends is NP-hard (see e.g.~\cite[Thm.~5.3]{Nollenburg05} for hardness of
having one plane octilinear drawing; this can be generalized to other types of grids).
%We do leave a gap between the order that we can always achieve and the order that is sometimes known to be necessary.

\section{Preliminaries}

We assume familiarity with graphs, see e.g.~\cite{Die12full}.   
In this paper, the input graph $G$ always has $n$ vertices and is assumed to be simple.
A \emph{drawing} of $G$ is an assignment of distinct
points to vertices of $G$, and a polygonal curve to every edge such that the curve of $e=(v,w)$ connects
the two points of $v$ and $w$, contains no other vertex-points, and any two edge-curves share a 
finite number of points.    
%We consider the edge-curve to be open at its ends, i.e., to not include the points of $v,w$.    
Where no confusion arises, 
the graph-theoretic name (`vertex', `edge') also refers to its geometric representation (`point', `curve').
%An \emph{edge-segment} (of an edge $e$) is a maximal straight-line segment within the 
%curve of $e$. The ends of an edge-segment of $e$ are either an endpoint of $e$ or a 
The place where an edge-curve changes direction is called a \emph{bend}, and an edge is called
\emph{unbent} if it has no bends. A drawing is called \emph{unbent} if all its edges are.
An \emph{unbent collection} (of drawings of some type that will be clear from context) is a set
of such drawings of the same graph such that every edge is unbent in at least one of them.

A \emph{planar drawing} of a graph is a drawing such that edge-curves do share not points except
at their endpoints;   the graph that has such a drawing is called \emph{planar}.
A \emph{plane graph} is a planar graph with a fixed \emph{rotation scheme}, i.e., for every vertex
$v$ there is a fixed cyclical order $\rho(v)$ of the incident edges.   A \emph{plane drawing}
of such a graph  is a drawing where the incident edges of every vertex $v$ appear in the order of 
$\rho(v)$ when enumerated in the ccw order in which they leave $v$.

Let $\calS\subseteq \mathbb{Q}\cup \{\infty\}$ be a finite set of \emph{slopes}.   
The \emph{gridlines} of an \emph{$\calS$-grid} are all those lines $\ell$ that have a slope in $\calS$ 
and that pass through a \emph{grid point}, i.e., a point with integral coordinates.     
(Since slopes are rational or infinity, any gridline passes through infinitely many grid points.)
The \emph{hexagonal grid} uses $\calS=\{0,1,\infty\}$%
\footnote{This is, up to a shear, equivalent to the more commonly used approach of defining the hexagonal
grid to have lines of slope $\{0,\pm \sqrt{3}/2\}$.}
 while the \emph{octilinear grid} uses $\calS=\{0,\pm 1,\infty\}$.
%For ease of description, the term
An \emph{$\d$-grid} (for some integer $\d\geq 2$) refers to an $\calS$-grid for some set $\calS$ with $|\calS|=\d$;
the specific set $\calS$ will not matter for the arguments.
%(some of the later details will depend on the specific set $\calS$ and not only its cardinality).

A drawing of a graph $G$ is an \emph{$\d$-grid drawing} 
if every vertex-point and bend is at a grid point
and every edge-segment lies on a gridline of the $\d$-grid.     For every 
%vertex $v$ (and every grid point more generally) 
grid point
there are $2\d$ incident \emph{gridrays}, i.e., rays that go along gridlines,
and edges can only leave $v$ along gridrays.   Therefore $\d$-grid drawings can only exist
if every vertex has degree at most $2\d$; graphs that satisfy this are called \emph{$2\d$-graphs}.
%The \emph{bounding box} of a $\d$-grid drawing is the smallest axis-aligned box that has all points
%of vertices and bends inside; the number columns and rows intersecting the bounding box is also
%called the \emph{width} and \emph{height} of a drawing, and the maximum of width and height is
%called the \emph{size}.

%Consider a vertex $v$ in a $\d$-grid drawing, and the $d$ gridlines through $v$.
%A \emph{free gridray} is a ray with origin $v$ and along a gridline such that no edge
%occupies the part of it immediately adjacent to $v$.   (Later parts of the ray may well contain
%edge curves and/or vertex points.)     If $v$ has degree $k<2\d$ then it has $2\d-k$ free gridrays.

A few graph-theoretic definitions are needed.  A \emph{pseudotree} is a connected
graph that is a tree after deleting at most one edge.   
%A \emph{forest} is a graph that has no cycles; it is called a \emph{tree} if it
%is furthermore connected.   
A \emph{pseudoforest} is a graph $G$ where every connected component is a pseudotree. 
A \emph{cactus} is a graph for which all cycles
are edge-disjoint.  A \emph{strict cactus} is a graph for which all cycles are vertex-disjoint.   
Every pseudoforest is a strict cactus.
Use the prefix `$2\d$-' for such graphs if their degrees are at most $2\d$.

An \emph{edge-partition} of a graph $G=(V,E)$ is a partition of $E$ with every
edge appearing in exactly one of the sets.   The \emph{arboricity} [\emph{pseudoarboricity}] of
a graph is the smallest number $k$ of sets in an edge-partition $E_1\cup \dots \cup E_k$ such that every subgraph $(V,E_i)$
is a forest [pseudoforest].     The following is known:

\begin{theorem}[\cite{NW61,PicardQ82}]
\label{thm:NW}
%\todo[inline]{Read Picard/Queyranne at UW}
%Let $G$ be a graph with at least one edge.
The arboricity of a graph $G$ is 
$\max_H \lceil \tfrac{|E(H)|}{|V(H)|-1}\rceil $
while its pseudoarboricity is $\max_H \lceil \tfrac{|E(H)|}{|V(H)|}\rceil$,
where the maximum is taken over all subgraphs $H$ of $G$ 
with enough vertices to make the denominator non-zero.
%with at least two [one] vertices, respectively.
%
%\begin{center}
%\vspace*{-3mm}
%{$\lceil \max\{ \tfrac{|E(H)|}{|V(H)|-1}: H\text{ is a subgraph of $G$ with $|V(H)|\geq 2$}\}\rceil$} 
%\vspace*{-3mm}
%\end{center}
%
%\noindent{}while its pseudoarboricity is  
%
%\begin{center}
%\vspace*{-3mm}
%{$\lceil \max\{ \tfrac{|E(H)|}{|V(H)|}: H\text{ is a non-empty subgraph of $G$}\}\rceil.$}
%\end{center}
\end{theorem}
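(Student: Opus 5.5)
The statement has two halves, and both have the same structure: the lower bound is a counting argument, and the upper bound needs a combinatorial tool. For the pseudoarboricity I would use Hall's theorem (via an orientation). For the arboricity I would use the matroid union theorem. The latter is the step I expect to be the real obstacle, and I would invoke it rather than reprove it.

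\emph{Lower bounds.} Suppose $E=E_1\cup\dots\cup E_k$ with every $(V,E_i)$ a forest. For any subgraph $H$, each $E_i\cap E(H)$ is a forest on $V(H)$, so it has at most $|V(H)|-1$ edges. Hence $|E(H)|\le k(|V(H)|-1)$, which gives $k\ge \lceil |E(H)|/(|V(H)|-1)\rceil$. For pseudoforests the argument is the same. Every connected component of a pseudoforest has at most as many edges as vertices, so $|E_i\cap E(H)|\le |V(H)|$ and $k\ge\lceil |E(H)|/|V(H)|\rceil$.

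\emph{Upper bound for pseudoarboricity.} Let $k$ be the maximum of $\lceil |E(H)|/|V(H)|\rceil$ over subgraphs $H$.
\begin{enumerate}
\item Build a bipartite graph with one side $E$ and the other side consisting of $k$ copies of each vertex $v$. Join edge $e=(v,w)$ to all copies of $v$ and of $w$.
\item A matching saturating $E$ is exactly an orientation of $G$ in which every vertex has in-degree at most $k$: orient each edge toward the endpoint it is matched to.
\item Check Hall's condition for a set $F\subseteq E$. Its neighbourhood is $k$ copies of each vertex of $V(F)$, so it has size $k|V(F)|$. Applying the hypothesis to the subgraph $(V(F),F)$ gives $|F|\le k|V(F)|$, so Hall's condition holds.
\item At each vertex, label its at most $k$ incoming edges with distinct labels from $1,\dots,k$, and let $E_i$ be the edges with label $i$.
\item In $(V,E_i)$ every vertex has in-degree at most $1$. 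So every component $C$ satisfies $|E(C)|\le|V(C)|$. A connected graph with that property is a tree plus at most one edge, i.e., a pseudotree. This gives $k$ pseudoforests.
\end{enumerate}

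\emph{Upper bound for arboricity.} Let $k$ be the maximum of $\lceil |E(H)|/(|V(H)|-1)\rceil$, and apply Edmonds' matroid union (partition) theorem to $k$ copies of the graphic matroid $M$ of $G$. It states that $E$ can be partitioned into $k$ independent sets, i.e., $k$ forests, if and only if $|F|\le k\,r_M(F)$ for every $F\subseteq E$.
\begin{enumerate}
\item For the graphic matroid, $r_M(F)=|V(F)|-c(F)$, where $c(F)$ is the number of connected components of $(V(F),F)$.
\item Split $F$ into its components $F_1,\dots,F_{c(F)}$. Applying the hypothesis to each component as the subgraph $H$ gives $|F_j|\le k(|V(F_j)|-1)$.
\item Summing over the components gives $|F|\le k(|V(F)|-c(F))=k\,r_M(F)$, which is exactly the required condition.
\end{enumerate}

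The only nontrivial ingredient is the matroid union theorem itself. If a self-contained argument were wanted, I would replace it by the standard augmenting-path proof. One takes a partial partition into $k$ forests with some edge $e$ still uncovered, and repeatedly swaps edges along fundamental cycles. If this never reaches an edge that can be added without creating a cycle, the set of edges explored spans a vertex set $U$ whose induced subgraph is saturated, with each of the $k$ forests spanning it. Together with the uncovered edge $e$, this subgraph has more than $k(|U|-1)$ edges, contradicting the hypothesis. Getting this exchange argument right is the main obstacle. Everything else is routine counting.
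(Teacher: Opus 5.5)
Your proposal is correct. The paper gives no proof of this statement. It quotes the result from Nash-Williams and Picard--Queyranne and only ever uses the upper-bound direction: for Lemma~\ref{lem:pseudoforest}, and for splitting $G-v$ into two forests in Cases~2 and~3 of Section~\ref{sec:4graph_cactus}. So your write-up reconstructs the cited results rather than competing with an argument in the paper.

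Your lower bounds are the standard counting. Your pseudoarboricity argument runs as follows: Hall's condition with $k$ copies of each vertex gives an orientation with in-degree at most $k$; labelling in-edges then leaves at most one in-edge per label at each vertex, so every component has $|E(C)|\le|V(C)|$ and is a pseudotree in the paper's sense. This is essentially the matching form of the flow argument behind the Picard--Queyranne citation.

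For arboricity you go through Edmonds' matroid partition theorem. Your check $|F|\le k(|V(F)|-c(F))=k\,r_M(F)$ applies the hypothesis to each component of $F$, each of which has at least two vertices since $G$ is simple. This is exactly what that theorem requires. The route buys brevity at the cost of a deep black box. The direct exchange argument you sketch as a fallback is the self-contained, algorithmic alternative found in textbook proofs of Nash-Williams' theorem: find $U$ on which all $k$ forests are spanning trees, so $G[U]$ plus the uncovered edge has more than $k(|U|-1)$ edges. As written it is only an outline, and the exchange step would need to be made precise. Since the matroid route already completes the proof, that does not create a gap.
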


\section{Proof-skeleton}
\label{sec:skeleton}

This section gives the proof of the main result: For $\d\geq 2$
%(Theorems~\ref{thm:orth} and~\ref{thm:hex_oct}):  
any $2\d$-graph $G$ has a small unbent collection of $\d$-grid drawings.
Some of the steps in these proofs
are non-trivial (and possibly of interest in their own right) and their proofs
are hence delayed till later.

As a first step, convert the
problem of finding unbent collections into an edge-partition problem.
%, and show the following:

\begin{lemma}
\label{lem:partition_to_drawing}
Let $G=(V,E)$ be a $2\d$-graph with an edge-partition $E=E_1\cup \dots \cup E_k$
such that 
	$G_i=(V,E_i)$ has an unbent $\d$-grid-drawing $\Gamma_i'$ 
	for $i\in [k]$.. 
Then $G$ has an unbent collection $\Gamma_1,\dots,\Gamma_k$ of $\d$-grid-drawings. 
\end{lemma}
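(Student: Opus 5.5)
For each $i\in[k]$ we take the unbent drawing $\Gamma_i'$ of $G_i$ and add the remaining edges $E\setminus E_i$ into it, possibly with bends, while keeping every edge of $E_i$ exactly as in $\Gamma_i'$. The resulting drawing $\Gamma_i$ is an $\d$-grid drawing of $G$ in which all edges of $E_i$ are unbent. Since $E_1,\dots,E_k$ partition $E$, the collection $\Gamma_1,\dots,\Gamma_k$ is then unbent. So the whole lemma reduces to one claim: given an $\d$-grid drawing of a spanning subgraph of the $2\d$-graph $G$, every missing edge can be inserted as a polygonal path along gridlines with bends at grid points. Planarity is not required, so crossings are allowed.

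\textbf{Preparation.} First scale $\Gamma_i'$ by a large integer factor $N$. This keeps it an unbent $\d$-grid drawing: slopes are unchanged and grid points map to grid points. It also creates plenty of free grid points near each vertex, away from all other vertices.

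\textbf{Port selection.} Each vertex $v$ has $2\d$ gridrays, of which only $\deg_{G_i}(v)$ are occupied. Since $\deg_G(v)\le 2\d$, we can give each missing edge at $v$ its own free gridray. We start the edge along this ray for a short distance, ending at a grid point $p_v$. Because of the scaling, the segment from $v$ to $p_v$ contains no vertex; it may overlap other edge segments only if those run along the same ray, which is excluded because the ray is free.

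\textbf{Routing.} We connect $p_v$ to $p_w$ by a staircase path using two non-parallel slopes of $\calS$. In general such a path may need an intermediate grid point to turn at, so we route through far-away points: leave $p_v$ along one slope to a grid point outside the bounding box of everything drawn so far, travel around outside, and come back in to $p_w$. Such routes exist because any two grid points can be joined by a two-segment path along two fixed rational slopes, possibly after detouring to points with suitable coordinates. This needs a small arithmetic argument using the rationality of slopes: the lattice generated by integer direction vectors of two slopes has finite index in $\mathbb{Z}^2$, and with enough segments we can reach any grid point.

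\textbf{Avoiding degeneracies.} We must ensure the new path passes through no vertex point and shares only finitely many points with other edges, i.e.\ no collinear overlaps. We handle this by choosing each intermediate gridline among infinitely many parallel candidates, avoiding the finitely many lines that contain vertices or existing segments. Edges are inserted one at a time, with each new one routed farther out.

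\textbf{Main obstacle.} I expect the main difficulty to be the routing step: showing rigorously that grid points on arbitrary rational-slope gridlines can always be connected this way with bends at grid points, while avoiding the finitely many forbidden lines. I would first handle it concretely by picking integer direction vectors $a,b$ for two slopes. Then I would use extra segments, and if needed a third slope, to correct residues modulo the lattice spanned by $a$ and $b$, staying outside the bounding box throughout.
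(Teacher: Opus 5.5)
Your overall plan is the paper's. You view $\Gamma_i'$ as a drawing of $G$ with missing edges, scale, and leave each endpoint of a missing edge along its own free gridray to a nearby grid point. You then join these two points by a bent path routed outside the bounding box, as in Claim~\ref{claim:addPoint} and Lemma~\ref{lem:insert_edges}.

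\textbf{The gap is in the routing step.} Your claim that ``with enough segments we can reach any grid point'' is false in general. A path that runs along gridlines and bends only at grid points, started at a grid point $p$, reaches exactly the coset $p+L$. Here $L\subseteq\mathbb{Z}^2$ is the lattice generated by the primitive integer direction vectors $a_\sigma$ of \emph{all} slopes $\sigma\in\calS$. This is because the grid points on the slope-$\sigma$ gridline through $p$ are precisely $p+t a_\sigma$ with $t\in\mathbb{Z}$. Finite index does not give $L=\mathbb{Z}^2$. For $\calS=\{1,-1\}$, or $\calS=\{1,-1,3\}$, the parity of $x+y$ is invariant along every such path. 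So neither extra segments nor a third slope can ``correct residues.'' Moreover, $G_i$ may be disconnected, and $\Gamma_i'$ may place two components in different cosets of $L$. A missing edge between them then cannot be routed at all, whatever route you choose.

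\textbf{The repair is a choice in your preparation step.} Take the scaling factor $N$ to be a multiple of $D=[\mathbb{Z}^2:L]$. Then $N\mathbb{Z}^2\subseteq L$, so every vertex lies in $L$, and hence so does every port $p_v=v+t a_\sigma$. Any two points of $L$ are joined by a gridline path with bends at grid points: write their difference as an integer combination of the $a_\sigma$. There is enough freedom among parallel gridlines far from the drawing to avoid vertices and collinear overlaps. The paper's ``scale suitably'' is silent on this point too. The issue is invisible for the orthogonal, hexagonal and octilinear grids, where $L=\mathbb{Z}^2$.

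\textbf{A smaller point about ports.} A free gridray only means that no edge leaves $v$ along it. Collinear segments of other edges may still lie further out on the same ray. What actually protects $\overline{vp_v}$ is that, after scaling, the first grid point on the ray comes before every occupied point, exactly as in Claim~\ref{claim:addPoint}. Freeness of the ray alone does not do this.
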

\begin{proof}
View each $\Gamma_i'$ as a drawing of $G$ where some edges (namely, those in $E\setminus E_i$)
are \emph{missing}.    So it remains to add the missing edges into drawing $\Gamma'_i$.
It is folklore how to do this for orthogonal drawings by adding at most two rows and two columns
per missing edge, see for example related discussions in \cite{PapakostasT95}.  
For $\d$-grid drawings, no previous discussions of how to add missing edges seem to exist, 
	but it is not hard to see that this can be done after scaling the drawing suitably;
details are in Section~\ref{sec:insert_edges}. 
\end{proof}

Note here Lemma~\ref{lem:partition_to_drawing} is \emph{not} true if 
the resulting collection must consist of \emph{planar} drawings;  this is the
main reason that Theorem~\ref{thm:hex_oct_planar} requires one more drawing
than Theorem~\ref{thm:hex_oct}.   
%Details of how to handle planar drawings are are given in Section~\ref{sec:insert_edges}. 

With Lemma~\ref{lem:partition_to_drawing} in hand, the next step is to find
subgraphs of $G$ that have unbent $\d$-grid-drawings. 
The following result is well known for trees and orthogonal drawings, and can easily be transferred
to strict cacti and $\d$-grid drawings. See Section~\ref{sec:easy_to_draw} for details
and the thicker lines in Figure~\ref{fig:octahedron} for an example for triangle-free pseudoforests.
%For example, the thicker lines in Figure~\ref{fig:octahedron} show two planar unbent orthogonal
%drawings of two triangle-free pseudoforests.

\begin{restatable}{lemma}{EasyToDraw}
\label{lem:easy_to_draw}
For all $\d$, every plane $2\d$-forest has a plane unbent $\d$-grid drawing.  \\
If $\d\geq 3$, then every strict $2\d$-cactus has a planar unbent $\d$-grid-drawing.  \\
Every triangle-free strict 4-cactus has a planar unbent orthogonal drawing.  
\end{restatable}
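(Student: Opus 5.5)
We prove the three claims of Lemma~\ref{lem:easy_to_draw} in order. Each drawing is built greedily: vertices are placed one at a time, and every new edge is a single segment along a free gridray. Planarity is guaranteed by making the edge lengths grow geometrically with depth, so that later subtrees stay inside narrow cones around the ray they hang from.

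\emph{Forests.} It suffices to treat one tree $T$ and place the trees of the forest far apart. Root $T$ at a vertex $r$ and place $r$ at the origin. The $2\d$ gridrays at $r$ come in a fixed cyclic order, and we assign the children of $r$ to distinct rays so that the order of the rays matches $\rho(r)$. Now process the vertices by BFS. A vertex $v$ at depth $i$ has its parent edge on one gridray at $v$, and at most $2\d-1$ children. Assign those children to the remaining rays at $v$ in the cyclic order prescribed by $\rho(v)$, starting just after the parent ray. Place each child at distance $L_i$ along its ray, where $L_0\gg L_1\gg\dots$ decrease quickly. Equivalently, scale up afterwards so that all coordinates are integral; since slopes are rational, each ray hits grid points periodically, so every $L_i$ can be taken as a multiple of a common period.

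Planarity follows by induction. The entire subtree below a vertex $v$ at depth $i$ lies within distance $\sum_{j\ge i}L_j<2L_i$ of $v$. Two edges leaving $v$ along different rays diverge at a constant angle, so if $L_{i+1}$ is a small enough constant fraction of $L_i$ (depending only on the minimum angle between slopes in $\calS$), the subtrees of different children of $v$ lie in disjoint neighbourhoods. They also avoid the edges at $v$. The rotation at each vertex is correct by construction, so the drawing is plane.

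\emph{Strict cacti for $\d\ge3$.} Contract every cycle to a single node; because cycles are vertex-disjoint, this yields a tree of ``blobs''. We repeat the BFS drawing above, but each blob is now drawn as a small unbent rigid gadget. With three slopes in $\calS$, such as $0$, $\infty$ and a third slope $\sigma$, every cycle length $k\ge3$ can be realized unbent and without crossings. For a triangle, use the triangle of a right-angle pair closed by a segment of slope $\sigma$; we may assume $\sigma$ is not $0$ or $\infty$ after an affine change, and for a general triple some triangle exists whenever three distinct directions exist. For larger $k$, draw a convex polygon whose sides use the available directions, subdividing long sides by collinear vertices, for example a long horizontal side carrying $k-2$ vertices. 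A cycle vertex has degree at most $2\d$, two of whose rays are used by the cycle, so its remaining at most $2\d-2$ edges must leave along the other rays. The main obstacle is that a collinear vertex on a side already uses both directions of its line, and the outward ones may point into the polygon. To handle this, put at most one vertex in the interior of a line segment only where needed, or better, use polygons where each vertex is a corner or flat point but all rays outside the polygon remain free. Rays pointing into the polygon can still be used for unbent edges to subtrees if the polygon is large relative to those subtrees ($L$ geometric again), since the subtree then sits inside the polygon's face near $v$ without crossing the polygon. Since only planarity, not a fixed rotation, is required, this suffices.

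\emph{Triangle-free strict 4-cacti, orthogonally.} Cycles now have length at least $4$ and are drawn as axis-parallel rectangles. Take a $4$-cycle as a unit square, scaled large. For a $k$-cycle with $k\ge5$, draw a rectangle with the extra vertices placed on its sides. A corner vertex then has two free rays, one pointing outward and one inward, and a side vertex has one free ray pointing outward and one inward, so degree $4$ is accommodated. Subtrees hung on inward rays are drawn at much smaller scale inside the rectangle. Different vertices' inward subtrees are separated because they lie in small neighbourhoods of distinct boundary points. With that, the same BFS-with-geometric-scaling argument gives planarity. Triangles are excluded exactly because they admit no unbent orthogonal drawing.
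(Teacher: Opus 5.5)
Your proposal is correct and is essentially the paper's proof. Like the paper, you contract the vertex-disjoint cycles to get a tree, draw it top-down, and keep every newly attached piece (a vertex or a cycle gadget) tiny compared to what is already drawn, so that any free gridray, even one pointing into a cycle polygon, can carry an unbent edge; your geometrically decreasing $L_i$ in a BFS play exactly the role of the paper's repeated global rescaling in Claims~\ref{claim:addPoint} and~\ref{claim:addCycle} during a pre-order traversal.

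A few details need tightening. \emph{First}, measure $L_i$ in primitive lattice steps along each direction rather than in Euclidean length: slopes $0$ and $1$ have Euclidean periods $1$ and $\sqrt{2}$, so no common period exists. \emph{Second}, let $L_{i+1}/L_i$ also depend on the longest cycle, since a gadget's diameter grows with its length. \emph{Third}, orient each gadget so that the parent edge at its attachment vertex leaves the polygon on the outside; a point reflection of the gadget through that vertex always achieves this. Note also that at a rectangle corner both free rays point outward, not one in and one out, which is harmless.
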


The final piece of the puzzle is to find an edge-partition into suitable subgraphs.   This is very easy if 
subgraphs need not be triangle-free.

\begin{lemma}
\label{lem:pseudoforest}
Every $2\d$-graph $G$ has an edge-partition into $\d$ pseudoforests, and
an edge-partition into $\d+1$ forests.
\end{lemma}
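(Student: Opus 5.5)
The plan is to apply Theorem~\ref{thm:NW} directly, so that both claims reduce to a counting bound on the edges of subgraphs. For every subgraph $H$ of $G$, the degree bound gives $2|E(H)| = \sum_{v\in V(H)} \deg_H(v) \leq 2\d\,|V(H)|$. Hence $|E(H)|\leq \d |V(H)|$.

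For the pseudoforest claim, this immediately yields $\lceil |E(H)|/|V(H)|\rceil \leq \d$ for every non-empty subgraph $H$. By Theorem~\ref{thm:NW}, the pseudoarboricity of $G$ is therefore at most $\d$. So $E$ has an edge-partition into at most $\d$ pseudoforests, and we can pad with empty sets to get exactly $\d$ parts, since an edgeless graph is trivially a pseudoforest.

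For the forest claim we must bound $\lceil |E(H)|/(|V(H)|-1)\rceil$ by $\d+1$ for every subgraph $H$ with at least two vertices. Write $m=|E(H)|$ and $h=|V(H)|\geq 2$. We need $m \leq (\d+1)(h-1) = \d h + h - \d - 1$. We already have $m\leq \d h$, so it suffices that $h \geq \d+1$. When $h\leq \d$, we instead use simplicity: $m\leq \binom{h}{2} = h(h-1)/2$, and $h/2 \leq \d+1$ holds trivially. Thus in both cases $m/(h-1)\leq \d+1$, and Theorem~\ref{thm:NW} gives arboricity at most $\d+1$. Again we pad with empty forests if fewer parts are used.

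The only slightly delicate point is the forest bound for small subgraphs, where $\d h$ exceeds $(\d+1)(h-1)$. Simplicity of $G$, which the preliminaries assume throughout, resolves it. Everything else is a direct substitution into Theorem~\ref{thm:NW}.
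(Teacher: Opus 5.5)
Your proposal is correct and matches the paper's proof essentially step for step: degree counting gives $|E(H)|\le s|V(H)|$, which bounds the pseudoarboricity via Theorem~\ref{thm:NW}, and the same case split ($h\ge s+1$ versus $h\le s$, using simplicity in the small case) bounds the arboricity by $s+1$. Your remark about padding with empty parts to get exactly $s$ or $s+1$ sets is a harmless detail that the paper leaves implicit.
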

\begin{proof}
Every subgraph $H$ of $G$ has maximum degree $2\d$, and hence $|E(H)|\leq \d h$
where $h=|V(H)|$.  So by Theorem~\ref{thm:NW}  the pseudoarboricity of $G$ is at most $\d$.

As for the arboricity, if $h\geq \d+1$ then $\d h\leq (\d{+}1)(h{-}1)$.
If $h\leq \d$ then any subgraph $H$ of $G$ with $h$ vertices has at most $h(h-1)/2\leq \d(h-1)/2\leq (\d+1)(h-1)$ edges by simplicity.
So the arboricity of $G$ is at most $\d{+}1$ by Theorem~\ref{thm:NW}.
\end{proof}

With this the proof of Theorem~\ref{thm:hex_oct} is finished: Split the edges of the given $2\d$-graph $G$ into $\d$ pseudoforests with Lemma~\ref{lem:pseudoforest}, each of them has an unbent $\d$-grid drawing by Lemma~\ref{lem:easy_to_draw}, and by Lemma~\ref{lem:partition_to_drawing} the theorem holds.   
The proof of Theorem~\ref{thm:hex_oct_planar} is very similar, but uses an edge-partition into forests and needs some extra care to preserve planarity, see Section~\ref{sec:insert_edges}.

For $d=2$ (i.e., Theorem~\ref{thm:orth}) the proof is a bit harder, because here it does not suffice to take any partition into pseudoforests; we have to find \emph{triangle-free} subgraphs.   It is unknown whether a partition into triangle-free pseudoforests always exists, but the following weaker result holds (the proof is non-trivial and in Section~\ref{sec:4graph_cactus}).

\begin{lemma}
\label{lem:4graph_cactus}
Every $4$-graph has an edge-partition into two triangle-free strict cacti.
\end{lemma}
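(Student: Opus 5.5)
The plan is to start from a very structured partition into two strict cacti and then repair triangles one at a time by local exchanges. I would first reduce to the case where $G$ is $4$-regular: embed $G$ as a subgraph of a simple $4$-regular graph $G^+$, for instance by taking several disjoint copies of $G$ and joining vertices of degree less than $4$ in different copies by new edges. Any edge-partition of $G^+$ into two triangle-free strict cacti restricts to one of $G$, because both properties are closed under taking subgraphs. By Petersen's theorem, $G^+$ has a $2$-factorization $F_1\cup F_2$. Each $F_i$ is a disjoint union of cycles, so it is in particular a strict cactus in which every vertex has degree exactly $2$. The only defect is that some of these cycles may be triangles.

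Among all edge-partitions $E=E_1\cup E_2$ of $G^+$ into two strict cacti, I would choose one minimizing the number of triangles contained in $E_1$ or $E_2$. Such partitions exist by the previous step. I would then show that if some triangle $abc$ lies in, say, $E_1$, a local move strictly reduces this count. The basic move transfers one triangle edge, say $ab$, from $E_1$ to $E_2$. On the $E_1$ side this is harmless: removing an edge never destroys the strict-cactus property, and it kills the triangle $abc$. On the $E_2$ side, note that $a$ and $b$ have degree at most $2$ in $E_2$ because they already have degree $2$ in the $E_1$-triangle. Therefore adding $ab$ keeps $E_2$ a strict cactus provided that $a$ and $b$ lie in different components of $E_2$, or that only a new cycle is formed which is vertex-disjoint from the other cycles of $E_2$ and is not a triangle. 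Among the three edges $ab,bc,ca$, I would look for one whose transfer to $E_2$ creates neither a new triangle nor two cycles sharing a vertex. This is a case analysis on how $a,b,c$ sit in the components of $E_2$.

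The main obstacle is the bad configuration in which every choice fails. A typical instance is when $a,b,c$ all lie on one common cycle $C$ of $E_2$, so that transferring any triangle edge creates a chord of $C$, and hence two cycles sharing a vertex or a new short cycle. In that case I would try a double exchange: transfer $ab$ to $E_2$ and simultaneously transfer to $E_1$ an edge of $C$ incident to $a$, chosen so that $C$ is opened into a path and the chord $ab$ closes at most one new cycle of length at least $4$. Since $a$ now has degree $2$ in $E_1$ again (one triangle edge removed, one $C$-edge added), I must check that no new triangle appears in $E_1$, using the fact that $\{a,b,c\}$ is no longer a triangle there. If a single double exchange does not decrease the triangle count, I would strengthen the potential lexicographically, for instance by the number of triangles and then the number of cycles of length $4$ or the sum of component sizes. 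I would then argue that some exchange decreases this refined potential, which finishes the proof by minimality.
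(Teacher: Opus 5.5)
\textbf{There is a genuine gap: the improvement step is never proved.}

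Your preliminary steps are sound. A simple $4$-regular supergraph $G^+$ exists, both properties pass to subgraphs, and Petersen's theorem gives a starting partition into two strict cacti. As you note, moving a triangle edge $ab$ from $E_1$ to $E_2$ lowers the triangle count when $a,b$ lie in different components of $E_2$, or more generally when they are joined by a single path of length at least $3$ that meets no cycle of $E_2$. The missing step is the one that carries the whole content of the lemma: that \emph{every} partition containing a triangle admits an improving move. When all three single transfers fail, you only state intentions: a double exchange whose validity you do not check, an unspecified lexicographic refinement, and ``I would argue that some exchange decreases this refined potential''. Nothing in the proposal rules out a local minimum of your potential that still contains a triangle, so as written this is a plan, not a proof.

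The paper uses no local search. It inducts on $n$ and splits along edge cuts of size at most $2$, whose edges become bridges. Otherwise no proper subgraph has $m\ge 2n-1$, so by Nash-Williams $G-v$ splits into two forests. At a vertex $v$ whose neighborhood has an anti-matching (or, for degree $3$, a non-adjacent pair), the edges at $v$ are added pairwise, closing at most one cycle per forest, of length at least $4$. If no such vertex exists, every vertex lies in a $K_4$, these $K_4$'s are disjoint, and $G$ is $4$-edge-colorable.

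\textbf{The case you single out is not the real obstacle.} Suppose $a,b,c$ all lie on a cycle $C$ of $E_2$. In $G^+$ each of them then has $E_1$-degree exactly $2$, so $abc$ is a whole component of $E_1$. Move $ab$ to $E_2$ and, in exchange, move to $E_1$ the $C$-edge $ax$ lying on the arc of $C$ from $a$ to $b$ that avoids $c$. Then $ax$ is a bridge of $E_1$, and $E_2$ gains exactly one new cycle, through $c$, of length at least $5$. So this case is easy.

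The configurations you leave unaddressed include those where some of $a,b,c$ has $E_1$-degree $3$. Then $abc$ is not a component of $E_1$, and an edge pushed back into $E_1$ can close a triangle or a cycle meeting another cycle. For example, in $K_5$ take $E_1=\{ab,bc,ca,ad,de\}$ and $E_2=\{ae,bd,be,cd,ce\}$:
\begin{itemize}
\item Both are strict cacti, and $E_2$ is the triangle-free $4$-cycle $bdce$ plus the pendant edge $ae$.
\item Transferring $ab$, $bc$ or $ca$ creates the triangle $abe$, $bcd$ or $ace$ in $E_2$.
\item Since $a$ is not on the cycle of $E_2$, your double exchange does not apply.
\item Swapping $ab$ with $be$ gives two $5$-cycles, but swapping $ab$ with $ae$ creates the triangle $ade$ in $E_1$.
\end{itemize}
What is missing is either a general rule for choosing the exchange together with a proof that a valid one always exists, or a global argument like the paper's.
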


Combining this with Lemma~\ref{lem:partition_to_drawing} and~\ref{lem:easy_to_draw} 
as before then proves Theorem~\ref{thm:orth}.

\section{Splitting a 4-graph into triangle-free strict cacti}
\label{sec:4graph_cactus}

This section gives the proof of Lemma~\ref{lem:4graph_cactus}.   So fix a 4-graph $G$
and proceed by induction on the number $n$ of vertices.   If $n\leq 5$, then by
simplicity $G$ is a subgraph of $K_5$.   One can easily find an edge-partition of $K_5$
into two 5-cycles, and the corresponding edge-partition of $G$ therefore consists
of triangle-free pseudoforests, which proves the result. 
Now assume that $n\geq 6$ and consider the following cases.

\medskip\noindent{\bf Case 1:} $G$ has an edge cut $E'$ of size at most two,
i.e., $G':=(V,E\setminus E')$ has multiple connected components.    
Write  $G'$ as the disjoint
union of two non-empty subgraphs $H^{1}$ and $H^2$ such that the edges between them are exactly $E'$, see Figure~\ref{fig:edgeCut}.
By induction, for $j\in [2]$ the graph~$H^j$ has an edge-partition $E^j_1\cup E^j_2$ into triangle-free strict cacti.
Since $H^1$ and $H^2$ are disjoint and disconnected, we can combine their
edge-partitions (i.e., define $E_i':=E_i^1\cup E_i^2$ for $i\in [2]$) to get an
edge-partition $E_1'\cup E_2'$ of $G'$ into two triangle-free strict cacti.    

If $E'=\emptyset$, then we are done, so 
assume that $E'$ is non-empty and add one of its edges (say $e_1$)
to $E_1'$ and the other edge $e_2$ (if if exists) to $E_2'$.    Let $G_i$ (for $i\in [2]$) be the
graphs formed by the resulting edge sets.    Observe that $e_1$ is a bridge in $G_1$,
because it is the only edge in $E_1'\cup \{e_1\}$ that connects $H^1$ to $H^2$.   Therefore $e_1$
belongs to no cycle,   or in other words, $G_1$ has the same set of cycles as 
%the graph formed by $E_1'$ and 
$(V,E_1')$ and is hence a triangle-free strict cactus.
\footnote{Note that $G_1$ is not necessarily a pseudoforest, even if $(V,E_1')$ was,
and it is not clear whether a stronger induction hypothesis (and more careful
assembly of the sets $E^j_{\ell}$ and $E'$ into $G_1$ and $G_2$) could lead to
a partition into triangle-free pseudoforests.}
Similarly $G_2$ is a triangle-free strict cactus.

\begin{figure}[ht]
%\hspace*{\fill}
%\includegraphics[scale=0.6,page=4]{problem.pdf}
\hspace*{\fill}
\includegraphics[scale=0.8,page=6]{problem2.pdf}
\hspace*{\fill}
\caption{The case where $G$ has a small edge-cut. 
}
\label{fig:smallCut}
\label{fig:edgeCut}
\end{figure}

\bigskip

For all following cases, it is assumed that Case 1 does \emph{not} apply.
In particular therefore
$G$ is connected, and in fact, it is 3-edge connected  and has no vertices
of degree 1 or 2.    Most importantly, we claim that $G$ does
\emph{not} contain a strict subgraph $G'$ with $m(G')\geq 2n(G')-1$.
(Call such a subgraph \emph{nearly 4-regular}.)
For if it did, then there must be at least three edges that connect $G'$
to the rest of $G$.    In consequence, there are at least three
vertices of $G'$ with at most three neighbors in $G'$.  By degree-counting,
therefore $2m(G')\leq 4n(G')-3$, which contradicts $m(G')\geq 2n(G')-1$.

To state the next case, we need a definition.   For a vertex-set $S$
of even size, an \emph{anti-matching} is an enumeration of $S$ as $\{s_1,\dots,s_{|S|}\}$
such that for $i=1,\dots,|S|/2$ there is no edge between $s_{2i-1}$ and $s_{2i}$.

\medskip\noindent{\bf Case 2:} There exists a vertex $v$ of degree 4 whose
neighborhood has an anti-matching $\{a,b,c,d\}$.   

Consider the graph $G'=G-v$.   
Observe that no subgraph of $G'$ (including $G'$ itself) is nearly 4-regular,
otherwise this would be a nearly 4-regular strict subgraph of $G$ and Case~1 applies.
Therefore $m(G'')\leq 2n(G'')-2$ for all subgraphs $G''$ of $G'$.\
By Theorem~\ref{thm:NW} graph
$G'$ has arboricity 2 and can be edge-partitioned into two forests
$F_1$ and $F_2$.   Define 
$E_1:=E(F_1)\cup \{(v,a),(v,b)\}$ and $E_2:=E(F_2)\cup \{(v,c),(v,d)\}$.
%where $\{a,b,c,d\}$ is the enumeration of $N(v)$ that is an anti-matching.
(Figure~\ref{fig:octahedronSplit}
illustrates how to apply this on the octahedron to obtain
the edge-partition used in Figure~\ref{fig:octahedron}.) 
Since $F_1$ is a forest that does not reach $v$, adding the two edges incident to $v$ could
close up a cycle, but if it does then there is only one cycle and it contains $a$-$v$-$b$.
Since $(a,b)$ was not an edge, this cycle is not a triangle.  So the graph formed by
$E_1$ is a triangle-free pseudoforest, and symmetrically so is the one formed by $E_2$.

\iffalse
\begin{figure}[ht]
\centering
%\includegraphics[scale=0.1,angle=180,trim=600 100 500 0,clip]{anti_matching.jpg}
\hspace*{\fill}
\includegraphics[scale=0.8,page=1]{antimatching.pdf}
\hspace*{\fill}
\includegraphics[scale=0.8,page=3]{antimatching.pdf}
\hspace*{\fill}
\includegraphics[scale=0.8,page=5]{antimatching.pdf}
\hspace*{\fill}
\caption{The case where $G$ has a degree-4 vertex $v$ with an anti-matching (dashed) in its neighborhood.   
}
\end{figure}
\fi
\begin{figure}[ht]
\centering
\hspace*{\fill}
\includegraphics[scale=0.8,page=2]{octahedron.pdf}
\hspace*{\fill}
\includegraphics[scale=0.8,page=3]{octahedron.pdf}
\hspace*{\fill}
\caption{The case where $G$ has a degree-4 vertex $v$ with an anti-matching $\{a,b,c,d\}$ in its neighborhood.   
}
\label{fig:octahedronSplit}
\end{figure}

\medskip\noindent{\bf Case 3:} There exists a vertex $v$ of degree 3 whose neighborhood
is not a triangle.    This 
case is handled almost exactly as the previous one: we
take two non-adjacent neighbors $a,b$ of $v$, split $G-v$ into two forests,
add $(v,a)$ and $(v,b)$ to one forest and the third
edge of $v$ to the other, and verify all conditions.

\medskip\noindent{\bf Case 4:} None of the above.   We claim that in this case
$G$ is 4-edge-colorable, so has an edge-partition $E^1\cup E^2\cup E^3 \cup E^4$
into four matchings.
This implies the result, because then $E_i:=E^i\cup E^{i+2}$ (for $i\in [2]$) forms
the disjoint union of even-length cycles and paths, and hence is a triangle-free pseudoforest.
Towards 4-edge-colorability, we first need an observation:

\begin{claim} If none of the previous cases applies, then any vertex $v$ belongs to
a complete graph $K_4$.
\end{claim}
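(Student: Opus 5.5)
We split according to the degree of $v$. Since Case~1 does not apply, $G$ is 3-edge-connected, so every vertex has degree 3 or 4. We treat the two possible degrees separately.

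\emph{Degree 3.} Suppose $\deg(v)=3$. Since Case~3 does not apply, the neighborhood $N(v)=\{a,b,c\}$ is a triangle. Then $\{v,a,b,c\}$ induces a $K_4$, and we are done.

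\emph{Degree 4.} Suppose $\deg(v)=4$ and $N(v)=\{a,b,c,d\}$. Since Case~2 does not apply, none of the three ways of pairing up $N(v)$ is an anti-matching. These pairings are $\{ab,cd\}$, $\{ac,bd\}$ and $\{ad,bc\}$. For each pairing, at least one of its two pairs must therefore be an edge of $G$. So $G[N(v)]$ contains at least one edge from each of these three perfect matchings of $K_4$ on $\{a,b,c,d\}$.

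\emph{Finding a triangle in $N(v)$.} We claim that any graph on $\{a,b,c,d\}$ meeting all three perfect matchings contains a triangle, unless its edges form a path or star with no triangle. We check this by cases.
\begin{itemize}
\item Suppose the three chosen edges, one from each matching, share a common vertex, say $ab,ac,ad$. This is a star and has no triangle.
\item Otherwise the three chosen edges form either a triangle or a path on all four vertices, such as $ab,bc,cd$ (taking $ab$ from the first matching, $cd$ from it as well, or more precisely one representative from each matching).
\end{itemize}
If a triangle $xyz$ occurs in $N(v)$, then $\{v,x,y,z\}$ is a $K_4$ and we are done. So the real work is to exclude the cases where $G[N(v)]$ is a star $K_{1,3}$, a path $P_4$, or a $C_4$. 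These are the triangle-free graphs that meet all three matchings.

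\emph{Excluding the star.} Suppose $a$ is adjacent to $b$, $c$ and $d$. Then $a$ has degree 4 with $N(a)=\{v,b,c,d\}$. Since $bc$, $bd$ and $cd$ are non-edges, the enumeration $\{v,b,c,d\}$ is an anti-matching if $v$ is paired with $b$ and $c$ with $d$, because $vb$ is an edge. So we pair differently: the pairs $c,d$ and $b,\cdot$ leave $v$ paired with $b$, which is an edge. This fails, so instead we look at $b$. Its neighborhood contains $v$ and $a$, which are adjacent, together with two other vertices $x,y$ outside $\{v,a,c,d\}$ or possibly equal to them. We then try the pairs $(v,x),(a,y)$. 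We expect that one of $b$, $c$, $d$ or $a$ admits an anti-matching or has degree 3 with a non-triangle neighborhood, which would contradict the assumption that Cases~2 and~3 do not apply.

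\emph{Excluding $P_4$ and $C_4$.} Suppose $N(v)$ induces a path $a$-$b$-$c$-$d$ or a 4-cycle. Then $v$ and $b$ are adjacent with common neighbors $a$ and $c$. Consider $N(b)=\{v,a,c,y\}$. The pairing $(a,c),(v,y)$ is an anti-matching unless $vy\in E$, and $vy\in E$ forces $y=d$. In the $P_4$ case $y=d$ makes $bd$ an edge, contradicting that $N(v)$ induces a path. Hence $b$ would satisfy Case~2, a contradiction. In the $C_4$ case we have $N(b)=\{v,a,c,x\}$ with $x\neq d$, since $bd$ is not an edge. The pairing $(a,c),(v,x)$ is then an anti-matching because $ac\notin E$ and $vx\notin E$. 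Again $b$ would satisfy Case~2, a contradiction. Degree-3 neighbors $b$ are handled in the same way using Case~3.

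\emph{Expected obstacle.} The main difficulty is doing this local case analysis on the neighbors of $v$ cleanly, especially in the star case, where the contradiction has to come from a neighbor of $v$ rather than from $v$ itself.
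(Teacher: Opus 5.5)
Your setup is right: degree~3 is settled by Case~3, and for degree~4 the failure of Case~2 at $v$ forces at least one edge from each of the pairings $\{ab,cd\}$, $\{ac,bd\}$, $\{ad,bc\}$. However, your enumeration of the triangle-free configurations is wrong, and the one case that actually occurs is left unproved. Two edges on $\{a,b,c,d\}$ are disjoint only if they come from the same pairing. So three edges taken one from each pairing pairwise intersect, and they therefore form a triangle or a star, never a path. $P_4$ and $C_4$ each miss a pairing entirely: the path $a$-$b$-$c$-$d$ and the cycle $a$-$b$-$c$-$d$-$a$ contain neither $ac$ nor $bd$. They are thus already excluded by Case~2 at $v$, and your paragraph on them is vacuous. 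The star $K_{1,3}$ is the only triangle-free possibility, and that is exactly where your argument stops with ``we expect''. As written, the proposal does not prove the claim.

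The missing step is short. Let $a$ be the center, so $ab,ac,ad\in E$ and $bc,bd,cd\notin E$. Then $a$ is saturated, $N(a)=\{v,b,c,d\}$; you observed this but never used it. Look at the leaf $b$, which has degree $3$ or $4$ by your first paragraph. Take any neighbor $x\notin\{v,a\}$ of $b$.
\begin{itemize}
\item $x\neq c,d$ because these are non-neighbors of $b$, so $x\notin N(v)\cup\{v\}=\{v,a,b,c,d\}$ and hence $vx\notin E$.
\item $x\notin N(a)$ because $N(a)$ is already full, so $ax\notin E$.
\end{itemize}
If $\deg(b)=3$, then $N(b)=\{v,a,x\}$ is not a triangle and Case~3 applies. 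If $\deg(b)=4$ with $N(b)=\{v,a,x,y\}$, then $\{v,x,a,y\}$ is an anti-matching and Case~2 applies. This contradiction shows that $N(v)$ contains a triangle, which together with $v$ forms a $K_4$. Your tentative pairing $(v,x),(a,y)$ was the right one. What was missing was checking that both pairs are non-edges, which uses $N(v)=\{a,b,c,d\}$ and the fact that $a$ has no free degree. This is exactly the paper's argument.
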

\begin{proof}
If $\deg(v)=3$, then its neighbourhood must be a triangle and hence forms a $K_4$ together with $v$.
Now assume that $\deg(v)=4$ and enumerate its neighbors  as $a,b,c,d$, see also Figure~\ref{fig:no_triangle}.
At least one of edges $(a,b)$ and $(c,d)$ must exist (otherwise there is an anti-matching); up to renaming $(a,b)\in E$.
At least one of edge $(a,c)$ and $(b,d)$ must exist (otherwise there is an anti-matching); up to renaming $(a,c)\in E$.
If $(b,c)$ exists then $\{v,a,b,c\}$ forms a $K_4$ and we are done, so assume that $(b,c)\not\in E$.   This implies that
$(a,d)\in E$ (otherwise there is an anti-matching).    If $(b,d)$ exists then $\{v,a,b,d\}$ form a $K_4$ and we are done.   So assume that $(b,d)$ does not exist.

\begin{figure}[ht]
\centering
\includegraphics[scale=1,page=7]{noantimatching.pdf}
\caption{The case where a vertex $v$ has no triangle among its neighbors.   Dotted lines indicate a pair of vertices that is not connected by an edge.
}
\label{fig:no_triangle}
\end{figure}

We claim that vertex $b$ can be used for one of the previous cases.   We know that $a,v$ are neighbors of $b$,
and $c,d$ are \emph{not} neighbors of $b$.   If $\deg(b)=2$ then Case~1 applies.   If $\deg(b)\geq 3$, then
let $b'\neq a,v$ be a third neighbor of $b$.   There is no edge $(v,b')$ since $v$ has exactly the
four neighbors $a,b,c,d$ and $b'\neq c,d$ since the latter are not adjacent to $b$.  So if $\deg(b)=3$ then $N(b)$ 
is not a triangle and Case 3 applies.   If $\deg(b)=4$, then let $b''\neq v,a,b'$ be the fourth neighbor
of $b$.   There is no edge $(a,b'')$ since $a$ already has the four neighbors $v,b,c,d$,  and $b''\neq c,d$. So $\{v,b',a,b''\}$
is an anti-matching of $N(b)$ and Case 2 applies.
\end{proof}

\begin{claim}
If none of the previous cases applies, then any two copies $H_1,H_2$ of $K_4$ in $G$ are either identical or vertex-disjoint.
\end{claim}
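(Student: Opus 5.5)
Suppose $H_1\neq H_2$ are two copies of $K_4$ that share a vertex, and let $k:=|V(H_1)\cap V(H_2)|\in\{1,2,3\}$. In each case I will derive a contradiction, either with the maximum degree 4 or with the standing assumption that Cases~1--3 do not apply. The main tool is the fact established after Case~1: no strict subgraph $G'$ of $G$ satisfies $m(G')\geq 2n(G')-1$.

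\textbf{Case $k=1$.} Let $x$ be the common vertex. Then $x$ has three neighbours in $H_1$ and three other neighbours in $H_2$. This gives $\deg(x)\geq 6$, which is impossible in a 4-graph.

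\textbf{Case $k=2$.} Let the common vertices be $x,y$. Each of $x,y$ has two private neighbours in $H_1$, two private neighbours in $H_2$, and is adjacent to the other. This gives degree at least $5$, again impossible.

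\textbf{Case $k=3$.} Let the common vertices be $x,y,z$, and let the remaining vertices be $u\in H_1$ and $w\in H_2$, with $u\neq w$. Each of $x,y,z$ is adjacent to the other two and to both $u$ and $w$, so each has degree exactly 4 and $N[\{x,y,z\}]=\{x,y,z,u,w\}$. Let $G'$ be the subgraph induced by $\{x,y,z,u,w\}$. It has $n(G')=5$ and at least $3+6=9=2\cdot 5-1$ edges, so it is nearly 4-regular. Since $n\geq 6$, $G'$ is a strict subgraph, which contradicts the fact that Case~1 does not apply.

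Alternatively, the edges leaving $G'$ all start at $u$ or $w$. Each of $u,w$ has at most one neighbour outside $G'$ if $(u,w)$ is absent, and fewer otherwise. So the cut separating $G'$ from the rest has size at most $2$, and Case~1 applies directly. Either way we reach a contradiction.

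The only step needing care is $k=3$, where I must confirm that $G'$ is a strict subgraph; this holds because $n\geq 6$. The cases $k\leq 2$ are pure degree counting.
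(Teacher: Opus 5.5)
Your proof is correct and follows the paper's argument. Your cases $k=1,2$ unroll the paper's single observation that a shared vertex is adjacent to everything in $H_1\cup H_2$, which forces $|H_1\cup H_2|\le 5$, and your $k=3$ case is exactly the paper's appeal to the nearly 4-regular graph ($K_5$ minus an edge, or $K_5$ itself), which is a strict subgraph because $n\geq 6$; your alternative direct count of a cut of size at most $2$ is also valid and just bypasses the nearly-4-regular observation.
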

\begin{proof}
Assume for contradiction that $H_1,H_2$ both contain vertex $v$, but for $i\in [2]$ vertex $w_i$
is in $H_i\setminus H_{3-i}$.   Since $\deg(v)\leq 4$, and it must be adjacent to all other vertices
in $H_1\cup H_2$, we have $|H_1\cup H_2|\leq 5$.
Therefore $H_1$ and $H_2$ share three vertices, and differ only in the
vertices $w_1$ and $w_2$.   It follows that $H_1\cup H_2$ 
%	forms the graph $K_5-e$, i.e., the 
is the complete graph on five vertices with one edge missing.   This is a nearly 4-regular graph, and a strict subgraph of $G$ by $n\geq 6$, and Case~1 applies. 
\end{proof}

Hence the vertices of $G$ can be partitioned into $V_1\cup \dots \cup V_\ell$, where each $V_i$ induces a $K_4$ (and $\ell=n/4$).
The edges of $G$ that are not in the $K_4$'s form a matching, since 
every vertex $v$ has three incident edges inside its $K_4$, and hence at most one edge going outside.
Note that $K_4$ is 3-edge colorable.   So three colors suffice to color the edges inside the $K_4$'s, and
a fourth color suffices to color the matching that connects them.   Hence $G$ is 4-edge-colorable.

\begin{figure}[ht]
\centering
\includegraphics[scale=0.8,page=10]{onlyK4s.pdf}
\caption{The case where $G$ consists of disjoint $K_4$'s. Colors indicate the 4-edge coloring; we also show the resulting edge-partition with dotted/solid edges.
}
\end{figure}

\medskip
As discussed earlier, 4-edge-colorability of $G$ implies that it can be split into two triangle-free pseudoforests, and hence Lemma~\ref{lem:4graph_cactus} holds.

\section{Adding to drawings}
\label{sec:insert_edges}
\label{sec:easy_to_draw}

It remains to show the two other ingredients:   Trees and strict cacti have unbent drawings, and edges can be inserted
into existing drawings if bends are allowed.   Both use related techniques, and both are nearly obvious for orthogonal
drawings as long as one may insert new gridlines.  So we only sketch the proofs here to ensure that  
nothing breaks when building $\d$-grid drawings for arbitrary $\d \geq 2$ instead.

\begin{claim}
\label{claim:addPoint}
Let $\Gamma$ be an $\d$-grid drawing and $v$ be a vertex where one of the gridrays $r$ emanating from $v$ is not
used by an edge.   Then, after scaling the drawing suitably, we can add a new vertex $w$ to $\Gamma$ as well
as an edge $(v,w)$ that is routed along $r$ without bend or crossing.
\end{claim}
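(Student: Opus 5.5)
The plan is to scale the drawing by a large integer factor $N$ and then place $w$ on $r$ very close to $v$. Scaling $\Gamma$ about the origin by $N$ maps grid points to grid points. It maps gridlines of slope $\sigma\in\calS$ to lines of the same slope through grid points, so the scaled drawing is again an $\d$-grid drawing, with the same combinatorial structure and the same set of unused gridrays at $v$. After scaling, all features of the drawing that are not incident to $v$ are at distance at least roughly $N\delta$ from $v$, for some $\delta>0$ depending only on $\Gamma$; here the features are vertex-points, bends, and edge-segments not containing $v$. Since $r$ is unused at $v$, there is a small initial piece of $r$ touching no edge and no vertex. After scaling, this free piece has length proportional to $N$.

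The key steps, in order, are the following.

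\begin{enumerate}
\item Let the slope of $r$ be $p/q$ in lowest terms, or vertical. Then consecutive grid points on the gridline containing $r$ are spaced by the fixed vector $(q,p)$ (or $(0,1)$), up to sign. Let $g$ be the first grid point on $r$ after $v$, so $g=v\pm(q,p)$; its distance from $v$ does not depend on $N$.
\item Bound from below the distance between $v$ and the nearest point of $r$ that lies on another edge or vertex. In the original drawing, consider the segment of $r$ from $v$ up to its first intersection with any vertex-point or edge-segment not emanating from $v$ along $r$. Edges incident to $v$ leave along other gridrays, so they meet $r$ only at $v$ near $v$; the rays of distinct slopes diverge. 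This segment has positive length $\lambda$, because there are finitely many segments and none of them contains the initial part of $r$. After scaling by $N$, this free segment has length $N\lambda$.
\item Choose $N$ with $N\lambda>|(q,p)|$. Place $w$ at $g$ and route $(v,w)$ as the straight segment $vg$ along $r$. This edge has no bend, lies on a gridline, and meets no other edge or vertex except at $v$. The new vertex $w$ is also distinct from all other vertex-points and lies on no edge-curve, so the result is a valid $\d$-grid drawing.
\end{enumerate}

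The only subtle point is step 2: we must make sure that ``unused'' rules out overlaps along $r$ and not merely near $v$. It also has to cover the case where another edge's segment lies on the gridline of $r$ but starts further away. This is why we take the first obstacle along $r$, including collinear segments, which is at positive distance by finiteness. If planarity or a plane drawing was given, it is preserved automatically, since no crossings are added and $(v,w)$ occupies a previously free gridray at $v$, consistent with any prescribed rotation.
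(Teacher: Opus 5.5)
Your proof is correct and is essentially the paper's argument: scale by an integer factor larger than the ratio of the distance from $v$ to the first grid point on $r$ over the distance from $v$ to the first obstacle on $r$. The first grid point on $r$ stays at the same distance after scaling, so it now lies strictly before any obstacle, and $w$ is placed there. Your closing aside that a free gridray is automatically consistent with any prescribed rotation is overstated: the ray must be chosen to fit the rotation, as the paper does in the proof of Lemma~\ref{lem:easy_to_draw}. That aside is not needed for the claim.
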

\begin{proof}
Walk along ray $r$ from $v$ until we hit another grid point $p$ (this must happen since slopes are rational).
Let $q$ be the first point (if any) during this walk that is occupied by $\Gamma$ (it could be a bend or vertex or
part of an edge-segment).   If there is no such point then we can place $w$ at $p$ and are done, so assume that
$q$ exists.  Define $f$ to be the
smallest integer bigger than $d(v,p)/d(v,q)$, and scale the entire drawing by factor $f$.%
\footnote{For the orthogonal and the hexagonal grid scale-factor $f\leq 2$ is always enough, and for the
octilinear grid $f\leq 3$ suffices.   For arbitrary slope-sets $\calS$ scale-factor $f$ could be quite large,
but it is finite since slopes are rational, and it has an upper bound that only depends on $\calS$ and not
on the size of the graph.}
With this, the
line segment from $v$ to $p$ along $r$ gets divided into $f$ equal-length line segments between gridpoints. 
We can place $w$ at the first of these gridpoints, which gives the result since
$d(v,w)=d(v,p)/f < d(v,q)$ and so no vertex, bend or edge lies on the straight-line segment from $v$ to $w$.
%
%We cannot always place $w$ at $p$, since
%it may already be occupied by another vertex and/or the segment $\overline{vp}$ may be crossed by other edges,
%see also Figure~\ref{fig:addToDrawing}.   Therefore we 
%scale $\Gamma$ by a suitably large factor $f$, i.e., we multiple all coordinates by $f$.   For orthogonal and hexagonal
%drawings $f=2$ suffices, and for octilinear drawings $f=3$ suffices.    For arbitrary $\d$-grids, we choose $f=\lceil 1/\delta\rceil +1$,
%where $\delta$ is the smallest distance between two points where gridlines meet.
%With this, in the scaled drawing
%the ray $r$ emanating from $v$ hits a grid point \emph{before} it has crossed any edge or met a vertex, and so we can
%place $w$ there.
\end{proof}

\begin{figure}[ht]
\centering
\includegraphics[scale=1,page=4]{addToDrawing.pdf}
\caption{Adding a new vertex $w$ after scaling the drawing such that along the specified ray $r$ we
reach a gridpoint before intersecting any element of the drawing.   In this example $f=3$ suffices.}
\label{fig:addToDrawing}
\end{figure}

\begin{claim}
\label{claim:addCycle}
Let $\Gamma$ be an $\d$-grid drawing and $v$ be a vertex with degree less than $2\d$.   Let $C$ be a cycle of vertices
not in $\Gamma$, with $|C|\geq 4$ if $\d=2$.  Then we can add $C$ and an edge $(v,w)$ (for some $w\in C$)
to the drawing without adding a bend or a crossing, after scaling suitably.
\end{claim}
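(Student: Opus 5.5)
Let $r$ be a free gridray at $v$, which exists because $\deg(v)<2\d$. I first use Claim~\ref{claim:addPoint} to scale the drawing and place a new vertex $w$ on $r$, joined to $v$ by an unbent, uncrossed segment. Then I build the cycle $C$ inside a small empty region near $w$. Because the segment $vw$ meets nothing, and because the open segment $vw$ has positive distance from every other element of $\Gamma$ (there are finitely many segments and points), some small neighbourhood of the part of $vw$ near $w$ is free of $\Gamma$. After further scaling by a large integer factor $F$, this neighbourhood contains a full copy of a fixed template of bounded size in grid units. The required factor depends only on the template and on the distance from $w$ to the rest of $\Gamma$, so it is finite.

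The template is a drawing of the cycle $C=(w=c_1,c_2,\dots,c_k)$ that is unbent, has no crossings, and uses only gridrays at $w$ other than the one pointing back toward $v$. I choose it as follows. If $\d\ge 3$, pick two slopes $\sigma_1,\sigma_2\in\calS$ that differ from the slope of $r$. Draw $C$ as a convex polygon lying in the halfplane beyond $w$ (away from $v$). The polygon uses only the three slopes $\sigma_1,\sigma_2$ and the slope of $r$; it is a triangle if $k=3$, and for larger $k$ I subdivide its sides with extra vertices, which is allowed since collinear consecutive vertices on a gridline are fine. If $\d=2$, then $k\ge4$ and I draw $C$ as an axis-parallel rectangle whose sides are subdivided, with $w$ a corner. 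The two edges of $C$ at $w$ then leave along the two gridrays perpendicular to $r$, or along one perpendicular ray and the ray continuing $r$. The other vertices are simply placed on grid points along the sides. For a triangle with arbitrary rational slopes, the three lines meet at rational points, so scaling makes all vertices grid points.

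The step I expect to be the main obstacle is the triangle case for a general slope set $\calS$ with $\d\ge3$. There I have to ensure that the two gridrays at $w$ used by $C$ differ from the ray $wv$. I also have to ensure that the triangle lies on the side of $w$ away from $v$, so that its interior does not contain the edge $vw$. Choosing rays at $w$ whose slopes differ from that of $r$ and pointing them into the far halfplane handles both, because two non-parallel rays from $w$ into that open halfplane together with a third slope close a triangle there. Finally, the rest of $\Gamma$ is untouched by $C$ because the template lies in the cleared neighbourhood, so no bend or crossing is created.
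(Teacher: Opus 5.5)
Your route is the paper's: scale until the free gridray $r$ at $v$ runs through an empty region big enough for a fixed template of $C$, attach $C$ at a vertex $w$ on $r$, and use three slopes when $|C|=3$. You make the template explicit where the paper does not. The one step that does not hold as written is the triangle case, which you yourself flagged as the crux.

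Write $\rho$ for the slope of $r$. Pointing the two rays at $w$ of slopes $\sigma_1,\sigma_2\neq\rho$ into the open halfplane beyond $w$ does \emph{not} ensure that a line of slope $\rho$ closes a triangle. Such a line meets both rays only if the line through $w$ of slope $\rho$, which is the line through $v$ and $w$, does not separate them. For example, take the octilinear grid with $r$ horizontal. The rays of slopes $1$ and $-1$ from $w$ that point away from $v$ both lie in the far halfplane, yet every horizontal line meets at most one of them. So your sentence that two non-parallel rays into that halfplane ``together with a third slope close a triangle there'' is false whenever the third direction lies inside the angle between the rays.

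The repair is easy and makes the halfplane condition unnecessary. Since $\sigma_i\neq\rho$, the gridline of slope $\sigma_i$ through $w$ has exactly one ray on each side of the line $vw$, so choose both rays on the \emph{same} side. Every translate of line $vw$ on that side then meets both rays, which gives the triangle (and its subdivisions for $|C|>3$). The triangle lies in a closed halfplane bounded by line $vw$ and touches that line only at $w$, so it meets the edge $vw$ only at $w$.

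Two minor points for $s=2$. First, the two slopes need not be orthogonal, so use a parallelogram rather than an axis-parallel rectangle. Second, for $|C|=4$ the vertex $w$ must be a corner, so the option of using both rays perpendicular to $r$ only exists when $w$ subdivides a side.
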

\begin{proof}
The proof is almost exactly the same as for the previous claim, except that we scale the drawing by a bigger factor
such that the ray from $v$ goes through a sufficiently large sub-grid 
%of size $|C|\times |C|$ 
before hitting any other edge
or vertex.   Into this subgrid we can then place cycle $C$, using three different slopes if $|C|=3$. 
\end{proof}

The desired drawing-result (restated here for convenience) is now an easy corollary:

% I don't know why restate doesn't do what it should, but let's hack it so that the numbers look right
\addtocounter{theorem}{-3}
\EasyToDraw*
\addtocounter{theorem}{3}

\begin{proof}
We may assume that the input-graph $G$ is connected, otherwise we can draw each component separately.  We may also assume that it has a degree-1 vertex $r$, otherwise temporarily add one and delete it from the drawing later.
Let $\calT$ be the tree obtained from $G$ by contracting every cycle $C$ into a vertex; 
the result is unique since all cycles are vertex-disjoint.
Root $\calT$ at $r$ and draw $G$ while processing $\calT$ in pre-order.   Drawing $r$ is trivial. To add the subgraph represented by a new child $w$, apply the appropriate one of the two claims above
at its parent $v$.
If $G$ was a plane forest, then $w$ always represents a vertex.   In this case choose a gridray $r$ at its
parent $v$ that is compatible with the rotation scheme, i.e., lies in the correct order between previously used gridrays at $v$, and leaves as many empty gridrays free between $\overline{vw}$ and other used gridrays as are needed for edges that will be placed later.
\end{proof}

The other missing piece (used in the proof of Lemma~\ref{lem:partition_to_drawing}) is how
to add edges to an existing drawing.   This is likewise very easy. 

%\InsertEdge*
%\begin{restatable}{lemma}{InsertEdge}
\begin{lemma}
\label{lem:insert_edges}
Let $\Gamma$ be an $\d$-grid drawing, and let $u,v$ be two vertices with degree less than $2\d$.
Then an edge $(u,v)$ can be added to the drawing.   If furthermore some connected region $R$ of
$\mathbb{R}^2\setminus \Gamma$ contains both $u$ and $v$ and free gridrays into $R$ at each,
then $(u,v)$ can be added without adding a crossing.
\end{lemma}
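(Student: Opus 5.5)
The plan is to reuse the scaling idea of Claim~\ref{claim:addPoint}: first create short private ``stubs'' at $u$ and at $v$, and then connect the stub ends by a route that has bends but stays away from vertices and bends. For the first part, choose a free gridray $r_u$ at $u$ and $r_v$ at $v$; these exist since both degrees are less than $2\d$. In the planar version, choose them pointing into $R$. Scale the drawing as in Claim~\ref{claim:addPoint} so that along $r_u$ we reach a grid point $u'$ before meeting any element of $\Gamma$, and likewise a grid point $v'$ along $r_v$. If $u'=v'$ (the two rays point at each other), we are done with an unbent edge. Otherwise the segments $\overline{uu'}$ and $\overline{vv'}$ become the first and last segments of the new edge, and neither creates a crossing.

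\textbf{The non-planar statement.} Scale once more, by a factor of $2$ or more, so that every vertex and bend has a clear neighbourhood: no grid point other than the element itself lies on any element of $\Gamma$ within distance $1$ of it. Now route from $u'$ to $v'$ using only two slopes of $\calS$, say $\sigma_1$ and $\sigma_2$. Because slopes are rational, the points reachable along gridlines of slopes $\sigma_1,\sigma_2$ from a grid point form a lattice $L$ that contains a scaled copy of the integer grid; scaling by the appropriate integer factor ensures that $u'$ and $v'$ lie in the same coset.

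The route is then a zig-zag: go along slope $\sigma_1$ from $u'$ and along slope $\sigma_2$ into $v'$, meeting at the intersection point of these two lines. If that intersection point, or one of the two segments, passes through a vertex or bend of $\Gamma$, or runs collinearly along an edge-segment, shift the meeting point. To do so, insert one extra detour of two more bends on a parallel gridline one lattice step away. Only finitely many lines are ``bad'', so a good parallel line exists. Crossings are allowed, so this suffices. Note also that $u'$ and $v'$ themselves are not on $\Gamma$.

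\textbf{The planar statement.} Here $u'$ and $v'$ lie in the same face $R$. Since $R$ is a connected open set, there is a polygonal curve $\gamma$ in $R$ from $u'$ to $v'$, and we may assume it is at positive distance $\varepsilon$ from $\Gamma$. Scale the drawing by an integer factor $f$ large enough that the $\d$-grid is fine relative to $\varepsilon$, i.e., every point is within $\varepsilon/2$, along a two-slope staircase, of grid points. Then approximate $\gamma$ by a staircase path along gridlines of two fixed slopes that stays inside the $\varepsilon$-neighbourhood of $\gamma$, hence inside $R$ and crossing-free. Finally, remove self-intersections of the staircase by shortcutting loops, so the edge is a simple curve.

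\textbf{Main obstacle.} The hard part is making the scaling argument honest for arbitrary rational slope sets. A uniform integer scale must simultaneously keep the already-existing drawing on grid points, keep the stubs short, and make the two-slope lattice dense enough near $\gamma$. I would handle this by noting that all coordinates and slopes are rational, multiplying by a common denominator, and then taking $f$ larger than $c_\calS/\varepsilon$ for a constant $c_\calS$ depending only on $\calS$. Everything else is routine.
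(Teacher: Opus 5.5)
Your overall plan matches the paper's: stubs at $u$ and $v$ via Claim~\ref{claim:addPoint}, then a connecting route with bends. In the planar case the paper only asserts that after enough scaling the part of the grid inside $R$ is connected. Your $\varepsilon$-approximation of a curve $\gamma\subset R$ is a faithful expansion of that. Your remark that two slopes reach only a sublattice of $\mathbb{Z}^2$, so one must scale to put $u',v'$ in the same coset, fills a point the paper glosses over for general $\calS$ (e.g.\ $\calS=\{1,-1\}$). The real difference is the non-planar routing. The paper notes that after suitable scaling some gridline through each stub end is \emph{new}, i.e.\ contains no vertex, bend or edge-segment of $\Gamma$. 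It follows such lines out of the bounding box and joins the ends there, so no general-position argument is needed.

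Your zig-zag replaces this by a perturbation argument, and as written that step has a hole. Moving to a good parallel line frees only the \emph{middle} segments. The first and last segments stay pinned to the line $\ell_1$ through $u'$ of slope $\sigma_1$ and the line $\ell_2$ through $v'$ of slope $\sigma_2$, and nothing in your argument keeps these clear. They may contain vertices, bends or collinear edge-segments of $\Gamma$. For example, if $\sigma_1$ is the slope of $r_u$ and the turning point lies on the $u$-side of $u'$, the first segment runs back over the stub through $u$. Your detour does not say in which direction to leave $u'$, and one step towards $u$ already reaches $u$.

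The clean fix is the paper's observation: leave $u'$ along a slope other than that of $r_u$. Let $u\in f\mathbb{Z}^2$ and $u'=u+(a,b)$ with $(a,b)$ primitive. The line through $u'$ with primitive direction $(a',b')$ has offset $b'x-a'y\equiv ab'-a'b\pmod f$, which is nonzero once $f$ exceeds all such $|ab'-a'b|$. Every vertex and bend of $\Gamma$, and every edge-segment parallel to this line, lies on a parallel line of offset $\equiv 0\pmod f$. So this line is entirely free, and likewise at $v'$. Only the middle segment(s) then need a good parallel line; in the orthogonal case with $r_u,r_v$ of equal slope this forces three segments. Your ``clear neighbourhood'' scaling does not address this: it concerns grid points near vertices and bends rather than the lines through $u'$ and $v'$, and it is never used.

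A smaller point in the planar case: when you shortcut loops, a crossing of the staircase with a stub whose slope is neither $\sigma_1$ nor $\sigma_2$ need not be a grid point. Choosing $\{\sigma_1,\sigma_2\}$ to contain the slopes of $r_u$ and $r_v$ makes all such crossings lattice points.
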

\begin{proof}
Use Claim~\ref{claim:addPoint} to add a new vertex $b_v$ at a grid-point next to $v$. By forcing the scale-factor $f$ in this proof to be at least 2, at least one gridline at $b_u$ is \emph{new} (added due to scaling) and so contains no other vertex, crossing or bend of $\Gamma$. 
Likewise add a new vertex $b_u$ next to $u$ on a new gridline.     
If the drawing is allowed to have crossings, then now simply add a polygonal curve from $b_u$ to $b_v$ by going view the new gridlines to outside the bounding box of $\Gamma$, and then connect there.%
\footnote{This route was chosen for ease of description; of course much prettier routes
could often be found.} 
If the drawing should be planar and region $R$ exists, then choose the gridrays when placing
$b_v,b_w$ so that the latter two are inside $R$.    After scaling the drawing sufficiently
much, the part of the $\d$-grid that lies inside $R$ becomes connected, and we can hence
connect $b_v$ and $b_w$ while staying inside $R$, hence without crossings (but possibly using lots of bends). 
\end{proof}

\begin{figure}[ht]
\hspace*{\fill}
\includegraphics[page=2,scale=1.1]{addToDrawing.pdf}
\hspace*{\fill}
\caption{Inserting a new edge after scaling the drawing suitably: first place bends in adjacent grid points, then connect within a common region (if there is one) or via outside the drawing.}
\label{fig:addEdgeToDrawing}
\end{figure}

%\begin{figure}[ht]
%\hspace*{\fill}
%\includegraphics[page=8,scale=1,trim=0 35 0 0,clip]{octahedron}
%\hspace*{\fill}
%\caption{Adding a missing edge (red dashed) into $\Gamma_i'$.   
%%The edge-route here is determined with the method from Section~\ref{sec:insert_edges}, and far from optimal.
%\todo[inline]{Bend not visible.  And generally probably not bother with complicated routing.}} 
%\end{figure}
%

We can finally complete the proof of Theorem~\ref{thm:hex_oct_planar}, which states
that a planar $\d$-graph $G$ has an unbent collection of $(\d{+}1)$ planar $\d$-grid drawings.
%We finally return to Theorem~\ref{thm:hex_oct_planar} once more, which states
%that a planar $\d$-graph $G$ has an unbent collection of $(\d+1)$ planar $\d$-grid drawings.
%(Its proof in Section~\ref{sec:skeleton} was incomplete since Lemma~\ref{lem:partition_to_drawing}
%cannot be used for planar drawings.)    
Find an edge-partition of $G$ into $d+1$ forests $F_1,\dots,F_{d+1}$ with Lemma~\ref{lem:pseudoforest}.
For $i=1,\dots,d+1$, let $F_i^+$ be the forest obtained by adding $\deg_G(v)-\deg_{F_i}(v)$ many
degree-1 vertices at each vertex $v$, i.e., add `stubs' for each edge of $G$ that is incident to $v$ but not in $F_i$.
Every vertex $v$ now has equally many edges in $G$ and $F_i^+$, and we let it inherit the
planar embedding of $G$ in $F_i^+$.
Obtain a plane drawing of $F_i^+$ with Lemma~\ref{lem:easy_to_draw}.   Since this 
has no crossing and respects the embedding of $G$, any missing edge $e=(v,w)$ can now be inserted
by connecting the two `stubs' that existed for $e$ at $v$ and $w$, using Lemma~\ref{lem:insert_edges}.
This finishes the proof of the theorem.

\section{Outlook}

In this paper, we studied how to create collections of $\d$-grid drawings such that every edge of the graph is drawn without bend in at least one of them.   This can always be done with $\d$ drawings if crossings are allowed, and $\d{+}1$ drawings otherwise.   For $\d=2$ this is best-possible; for $\d\geq 3$ it remains an open problem whether there are graphs that require more than two drawings in such a collection.   ($K_{2\d+1}$ would be a natural candidate to try.)

We note that for all our results we paid no attention to the grid size, especially when scaling in Section~\ref{sec:insert_edges}.  For orthogonal drawings we can always postprocess the result and remove unused gridlines; the non-planar drawings then have size $O(n)\times O(n)$ since we only use a constant number of gridlines for each edge.   For $\d$-grid drawings for $\d\geq 3$ the drawings may well end up with exponential size; reducing this to linear size (or proving it impossible) remains for future study.

A graph-theoretic open question concerns 
Lemma~\ref{lem:4graph_cactus} where we proved that a 4-graph has
an edge-partition into two triangle-free strict cacti.   Can we
replace `strict cacti' by `pseudoforest'?   Also, does the result hold
for higher degrees, i.e., can we split $2\d$-graphs into $\d$ triangle-free
strict cacti for $\d\geq 3$?   The latter would be useful for unbent collection of
orthogonal drawings in $d$-dimensional space (hence for $2d$-graphs). Here 
it is easy to achieve order $d+1$ by partitioning into 
forests, but can we reduce this to order $d$ or even less?

\todo[inline]{Maybe talk about how Lemma~\ref{lem:easy_to_draw} is false for a general cactus?   Use 5-cycle with a 4-cycle attached at every vertex.}

\bibliographystyle{plainurl}
\bibliography{journal,full,gd,unbent,papers}

\end{document}